\documentclass[a4paper,10pt]{article}
\usepackage{fullpage}
\usepackage[dvipdfmx]{graphicx}
\usepackage[T1]{fontenc}
\usepackage{lineno}
\usepackage{amsmath,amssymb,mathtools,amsthm}
\usepackage{url}
\usepackage{comment}
\usepackage{fancybox,ascmac}
\usepackage[subrefformat=parens]{subcaption}
\usepackage{color}
\usepackage{paralist}
\usepackage{comment}
\usepackage{authblk}
\usepackage{enumerate}
\usepackage{thm-restate}
\usepackage[colorlinks=true,linkcolor=blue,citecolor=blue]{hyperref}
\usepackage{url}
\usepackage{lineno}

\newtheorem{theorem}{Theorem}
\newtheorem{lemma}{Lemma}

\newtheorem{corollary}{Corollary}

\long\def\nop#1{}

\newcommand{\NP}{$\mathsf{NP}$}
\newcommand{\NPc}{$\mathsf{NP}$-complete}
\newcommand{\NPh}{$\mathsf{NP}$-hard}

\newcommand{\FS}{\rm forcing~set}
\newcommand{\AFS}{\rm anti-forcing~set}
\newcommand{\FN}{\rm forcing~number}

\newcommand{\FGPM}{\sc ForcingGPM} % Forcing Set for a given Perfect Matching
\newcommand{\AFGPM}{\sc Anti-ForcingGPM}
\newcommand{\FPM}{\sc ForcingPM} % Forcing Set for Perfect Matching
\newcommand{\AFPM}{\sc Anti-ForcingPM}
\begin{document}
\title{Hardness of Forcing Unique Perfect Matchings in Bipartite Graphs of Maximum Degree 3}
\author[1]{Ryoma Aoshima}
\author[1]{Takashi Horiyama}
\author[2]{Atsuki Nagao}
\author[1]{Fumiya Sakamoto}
\author[2]{Hibiki Sato}
\author[1]{Kazuhisa Seto}
\author[2]{Karin Umebayashi}
\affil[1]{Hokkaido University\\
\texttt{aoshima.ryoma.j3@elms.hokudai.ac.jp\\ \{horiyama,seto\}@ist.hokudai.ac.jp}}
\affil[2]{Ochanomizu University\\
\texttt{a-nagao@is.ocha.ac.jp}}
\date{}

\maketitle

\begin{abstract}
In a graph $G$, a set of edges $F$ is called a \emph{forcing set} if there exists a unique perfect matching $M$ such that $F \subseteq M$.
Similarly, a set of edges $A$ is called an \emph{anti-forcing set} if the graph with edge set $ E(G)\setminus A$ has a unique perfect matching.
It is known that, given a bipartite graph $G$ of maximum degree~$3$ and a perfect matching $M$, the problem of deciding whether there exists a forcing set of size at most $k$ for $M$ is {\NPc}.
Moreover, given a bipartite graph $G$ of maximum degree~$4$ and a perfect matching $M$, the problem of deciding whether there exists an anti-forcing set of size at most $k$ for $M$ is {\NPc}.
Furthermore, given a bipartite graph of maximum degree~$5$, the problem of deciding whether there exists a perfect matching $M$ that can be made unique by a forcing set of size at most $k$ is also {\NPc}.
In contrast, the computational complexity of deciding whether there exists a perfect matching $M$ that can be made unique by an anti-forcing set of size at most $k$ is not known, even for general graphs.
In this paper, we show that all of these problems remain {\NPc} even when restricted to bipartite graphs of maximum degree~$3$.
\end{abstract}

\section{Introduction}
A perfect matching $M$ of a graph $G$ is a set of edges such that every vertex of $G$ is incident to exactly one edge in $M$.
This concept is used in organic chemistry to describe Kekul\'{e} structures.
For example, the carbon skeleton of benzenoid hydrocarbons can be represented by hexagonal systems that admit perfect matchings, and thus Kekul\'{e} structures of hexagonal systems have been extensively studied.
Randi\'{c} and Klein~\cite{randic1985kekule} defined the innate degree of freedom as the minimum number of double bonds required to determine a Kekul\'{e} structure uniquely.

Based on the notion of the innate degree of freedom, Harary, Klein, and {\v{Z}}ivkovi{\v{c}}~\cite{harary1991graphical} defined the forcing number to be the minimum size of an edge set that uniquely determines a perfect matching.
Kleinerman~\cite{kleinerman2006bounds} proved that the {\FN} of both a $2m \times 2n$ rectangle and a torus is $mn$.
The complexity of computing an edge set that uniquely determines a perfect matching has been studied.
A forcing set $F$ for a perfect matching $M$ is a subset of edges in $M$ that uniquely determines $M$, i.e., there exists no other perfect matching including all edges in $F$.
Adams, Mahdian, and Mahmoodian~\cite{adams2004forced} proved that deciding whether there exists a {\FS} of size at most $k$ that uniquely determines a given perfect matching is {\NPc} even for bipartite graphs of maximum degree~$3$.
Afshani, Hatami, and Mahmoodian~\cite{afshani2009spectrum} showed that deciding whether there exists a perfect matching uniquely determined by a {\FS} of size at most $k$ is {\NPc} even for bipartite graphs of maximum degree~$5$.\footnote{Although the paper~\cite{afshani2009spectrum} claims {\NP}-completeness for bipartite graphs of maximum degree~$4$, the graph obtained after the reduction used to prove {\NP}-completeness actually has maximum degree~$5$.} 

There is also a dual concept to {\FS}.
Li~\cite{LI1997295} introduced the anti-forcing edge as an edge that is excluded from exactly one perfect matching and proved that hexagonal systems with an anti-forcing edge are truncated parallelograms.
Vuki{\v{e}}evi{\'c} and Trinajsti{\'c}~\cite{vukiveevic2007anti} introduced the notion of the anti-forcing number of a graph and defined it as the minimum cardinality of an edge subset whose deletion yields a graph having a unique perfect matching.
Deng~\cite{deng2007anti} computed the anti-forcing number for a single hexagonal chain, and later extended the result to double hexagonal chains~\cite{deng2008anti}.
An anti-forcing set for a perfect matching $M$ is a set of edges of a graph $G$ such that the graph obtained by removing all edges in the set from $G$ 
has the unique perfect matching $M$.
Deng and Zhang~\cite{deng2017anti} showed that deciding whether there exists an {\AFS} of size at most $k$ that uniquely determines a given perfect matching is {\NPc} even for bipartite graphs of maximum degree~$4$.
However, the complexity of deciding whether there exists a perfect matching uniquely determined by an {\AFS} of size at most $k$ has remained open.

\paragraph{\bf Our Contribution:}
We first propose a polynomial-time reduction from the problem of computing a {\FS} to the problem of computing an {\AFS} via edge subdivision in graphs.
Using this technique, we strengthen Deng and Zhang's result by proving that deciding whether there exists an {\AFS} of size at most $k$ that uniquely determines a given perfect matching remains {\NPc}, even for bipartite graphs of maximum degree~$3$.
Next, inspired by a gadget construction of Demaine, Karntikoon, and Pitimanaaree~\cite{DemaineKP25}, we introduce a gadget that reduces the maximum degree of a given bipartite graph from five to three.
This improves Afshani et al.'s result by showing that deciding whether there exists a perfect matching uniquely determined by a {\FS} of size at most $k$ is {\NPc} even for bipartite graphs of maximum degree~$3$.
Finally, combining these results, we resolve the open problem by proving that deciding whether there exists a perfect matching uniquely determined by an {\AFS} of size at most $k$ is {\NPc} even for bipartite graphs of maximum degree~$3$.

\paragraph{\bf Related Work:}
There exist many studies on the forcing set, the anti-forcing set, and the global forcing set for a perfect matching.
See the excellent survey paper~\cite{ZhangHLZ25}.
Furthermore, the forcing set has been widely studied for many graph problems: graph colorings~\cite{Harary07}, independent sets~\cite{Larson13}, dominating sets~\cite{Chartrand97}, independent dominating sets~\cite{Armada19}, minimum spanning trees, and shortest paths~\cite{GKOS26}.

\section{Preliminaries}
Let $G = (V(G), E(G))$ be a graph, where $V(G)$ is the vertex set and $E(G)$ is the edge set.
A graph $G$ is called a \emph{bipartite graph} if $V(G)$ can be partitioned into two sets $X$ and $Y$ such that for every edge, one endpoint is in $X$ and the other is in $Y$.
For each $v \in V(G)$, the \emph{degree} of $v$ is the number of edges incident to $v$.
A \emph{matching} is a set of edges $M \subseteq E(G)$ such that no two edges in $M$ share a common endpoint.
If $M$ covers all vertices, then $M$ is called a \emph{perfect matching}.
Let $\mathcal{M}(G)$ denote the set of perfect matchings of $G$.
For $M \in \mathcal{M}(G)$, a \emph{forcing set} of $M$ is an edge set $F \subseteq M$ such that $F$ is contained in no other perfect matchings in $\mathcal{M}(G)$.
Similarly, for $M \in \mathcal{M}(G)$, an \emph{anti-forcing set} of $M$ is an edge set $A \subseteq E(G) \setminus M$ such that $G'=(V(G), E(G)\setminus A)$ has only one perfect matching $M$.

The problem {\FGPM} is the problem of deciding, given a graph $G$, a perfect matching $M$, and a nonnegative integer $k$, whether there exists a {\FS} of $M$ of size at most $k$.
The problem {\FPM} is the problem of deciding, given a graph $G$ and a non-negative integer $k$, whether there exists a perfect matching $M$ uniquely determined by a {\FS} of size at most $k$.

\begin{lemma} \label{lem:FPM_in_NP}
    {\FGPM} and {\FPM} belong to {\NP}.
\end{lemma}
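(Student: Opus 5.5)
We show that both problems have polynomial-size certificates checkable in polynomial time. The key subroutine is the following: given a graph $G$, a perfect matching $M$ of $G$, and a set $F \subseteq M$, decide in polynomial time whether $F$ is a {\FS} of $M$, i.e., whether $M$ is the unique perfect matching of $G$ containing $F$.

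To do this, let $G_F$ be the graph obtained from $G$ by deleting both endpoints of every edge in $F$. Perfect matchings of $G$ that contain $F$ correspond bijectively to perfect matchings of $G_F$, via $M' \mapsto M' \setminus F$. So $F$ is a {\FS} of $M$ if and only if $M \setminus F$ is the unique perfect matching of $G_F$.

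Uniqueness of a given perfect matching $N$ of a graph $H$ can be tested in polynomial time. By the standard symmetric-difference argument, $H$ has another perfect matching if and only if $H$ contains an $N$-alternating cycle: for another perfect matching $N'$, the set $N \triangle N'$ is a nonempty disjoint union of even alternating cycles, and conversely flipping an alternating cycle yields a different perfect matching. In the bipartite setting of this paper, alternating cycles can be detected directly. With bipartition $(X,Y)$, orient every edge of $N$ from $X$ to $Y$ and every other edge from $Y$ to $X$. Then $N$-alternating cycles are exactly the directed cycles of this digraph, which can be found by DFS. For general graphs one can avoid cycle detection altogether: for each edge $e \in N$, delete $e$ and test via Edmonds' algorithm whether $H - e$ still has a perfect matching. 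Either way the test runs in polynomial time.

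With this subroutine, membership follows. For {\FGPM}, the certificate is a set $F \subseteq M$ with $|F| \le k$. The verifier checks $F \subseteq M$, checks $|F| \le k$, and runs the subroutine. For {\FPM}, the certificate is a pair $(M,F)$. The verifier checks that $M$ is a perfect matching of $G$, that $F \subseteq M$, that $|F|\le k$, and that $F$ is a {\FS} of $M$. Both certificates have size polynomial in $|E(G)|$.

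The only nontrivial step is the correctness of the uniqueness test, i.e., the alternating-cycle characterization. This is classical, so we only recall it rather than prove it.
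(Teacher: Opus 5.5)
Your proof is correct and matches the paper's argument: both use the certificate $(M,F)$ and reduce verification to checking that the graph obtained by deleting the endpoints of $F$ has a unique perfect matching. The only difference is the uniqueness subroutine. The paper cites the Gabow--Kaplan--Tarjan unique-matching algorithm, whereas you use the elementary test that works because the candidate matching $M\setminus F$ is already known: alternating-cycle detection in the bipartite orientation, or deleting each of its edges and rerunning a matching algorithm. That is entirely sufficient for membership in {\NP}.
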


\begin{proof}
    As a certificate, we give a perfect matching $M$ and an edge set $F \subseteq M$ of size at most $k$.
    To verify that $F$ is a {\FS} of $M$, it suffices to check that the graph $G'$, obtained from $G$ by removing all edges in $F$ and all edges adjacent to edges in $F$, has a unique perfect matching.
    The transformation from $G$ to $G'$ can be done in polynomial time.
    It has been shown that the uniqueness of a perfect matching in $G'$ can be decided in $O(|E(G')| \log^4 |V(G')|)$ time~\cite{gabow1999unique}.
    Therefore, Lemma~\ref{lem:FPM_in_NP} holds.
\end{proof}

The problem {\AFGPM} is the problem of deciding, given a graph $G$, a perfect matching $M$, and a nonnegative integer $k$, whether there exists an {\AFS} of $M$ of size at most $k$.
The problem {\AFPM} is the problem of deciding, given a graph $G$ and a non-negative integer $k$, whether there exists a perfect matching $M$ uniquely determined by an {\AFS} of size at most $k$.

\begin{lemma} \label{lem:AFPM_in_NP}
    {\AFGPM} and {\AFPM} belong to {\NP}.
\end{lemma}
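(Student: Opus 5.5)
We mirror the proof of Lemma~\ref{lem:FPM_in_NP}, giving a polynomial-size certificate that can be checked in polynomial time. For {\AFGPM} the instance already fixes $G$, $M$ and $k$, so the certificate is just an edge set $A$ with $|A| \le k$. For {\AFPM} the instance is only $G$ and $k$, so the certificate is a pair $(M, A)$: an edge set $M$ that is claimed to be a perfect matching of $G$, and an edge set $A$ with $|A| \le k$. In both cases the certificate has size $O(|E(G)|)$.

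The verifier makes three checks.
\begin{enumerate}
\item For {\AFPM}, it checks that $M$ is a perfect matching of $G$, i.e., that every vertex of $G$ is covered by exactly one edge of $M$. This takes linear time.
\item It checks that $A \subseteq E(G) \setminus M$ and that $|A| \le k$.
\item It forms $G' = (V(G), E(G) \setminus A)$ and checks that $G'$ has exactly one perfect matching.
\end{enumerate}

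For the third check, note that $A$ is disjoint from $M$, so $M \subseteq E(G')$ and $M$ is a perfect matching of $G'$. Hence $G'$ has a unique perfect matching if and only if $M$ is the only one. We decide uniqueness with the algorithm of~\cite{gabow1999unique}, which runs in $O(|E(G')| \log^4 |V(G')|)$ time.

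Alternatively, since $M$ is already known, there is an elementary test. A second perfect matching exists if and only if $G'$ has an $M$-alternating cycle, because the symmetric difference of two distinct perfect matchings is a union of such cycles. To find one, orient each $M$-edge in one direction and each non-$M$ edge in the other for bipartite graphs, and look for a directed cycle. For general graphs, instead delete each edge $e \notin M$ of $G'$ in turn and search for a perfect matching containing $e$.

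Correctness follows directly from the definitions. An anti-forcing set of $M$ of size at most $k$ is exactly a set $A$ passing these checks. For {\AFPM}, such a pair $(M, A)$ exists exactly when the instance is a yes-instance.

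There is no real obstacle in this argument. The only point needing care is that the verifier must check $A \cap M = \emptyset$: the definition requires $A \subseteq E(G)\setminus M$, and this is also what guarantees that $M$ survives in $G'$.
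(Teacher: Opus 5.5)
Your proof is correct and follows the paper's argument: you certify with $(M,A)$, check that $A \subseteq E(G)\setminus M$ and $|A|\le k$, and test whether $(V(G),E(G)\setminus A)$ has a unique perfect matching using the algorithm of Gabow, Kaplan, and Tarjan. One small slip, in your optional elementary test for general graphs: you cannot delete $e$ and then search for a perfect matching containing $e$; you should instead delete the endpoints of each $e\notin M$ and test for a perfect matching, but the main argument does not depend on this aside.
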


\begin{proof}
    A similar proof to that of Lemma~\ref{lem:FPM_in_NP} applies.
    As a certificate, we give a perfect matching $M$ and an edge set $A \subseteq E(G) \setminus M$ of size at most $k$.
    To verify that $A$ is an {\AFS} of $M$, it suffices to check that the graph $G'$, obtained from $G$ by removing all edges in $A$ has a unique perfect matching.
    The transformation from $G$ to $G'$ can be done in polynomial time.
    It has been shown that the uniqueness of a perfect matching in $G'$ can be decided in $O(|E(G')| \log^4 |V(G')|)$ time~\cite{gabow1999unique}.
    Therefore, Lemma~\ref{lem:AFPM_in_NP} holds.
\end{proof}

\section{On Forcing Set and Anti-Forcing Set for Perfect Matching}

\subsection{Subdivision of Edges}\label{sec:seg}
In this section, we provide a polynomial-time reduction from {\FGPM} and {\FPM} to {\AFGPM} and {\AFPM}, respectively, by subdividing edges.
Figure \ref{fig:subdivision} illustrates the subdivision of an edge $e$.
Let the subdivided edges be denoted by $e^u = \{u, u^e\}$, $e^{\textnormal{mid}} = \{u^e, v^e\}$, and $e^v = \{v^e, v\}$.
\begin{figure}
    \centering
    \includegraphics[scale=0.55]{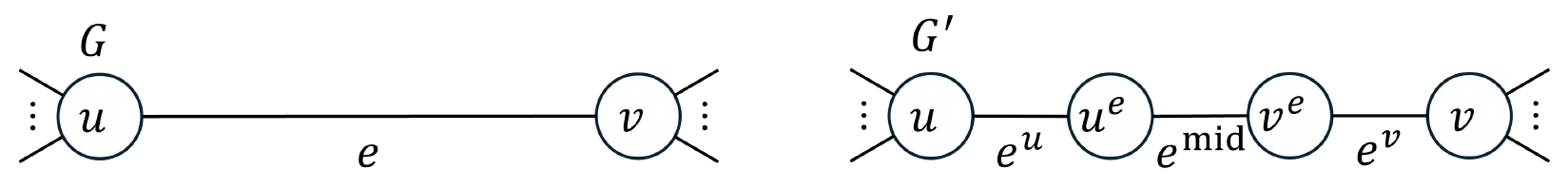}
    \caption{Subdivision of an edge}
    \label{fig:subdivision}
\end{figure}
Let $V^e = \{u^e, v^e\}, E^e_{\textnormal{mid}}=\{e^{\textnormal{mid}}\}, E^e_{\textnormal{end}}= \{e^u, e^v\}$ and $E^e=E^e_{\textnormal{mid}} \cup E^e_{\textnormal{end}}$.
Let $G'$ be the graph obtained by subdividing all edges of $G$.
The vertex set of $G'$ is 
\[
V(G') = V(G) \cup V_{\mathcal{G}} = V(G) \cup \bigcup_{e \in E(G)} V^e,
\]
and the edge set of $G'$ is
\[
E(G') = E_{\textnormal{mid}} \cup E_{\textnormal{end}}
= \bigcup_{e \in E(G)} E^e_{\textnormal{mid}} \cup \bigcup_{e \in E(G)} E^e_{\textnormal{end}}.
\]

\begin{lemma} \label{lem:prop_M}
Let $M'$ be an arbitrary perfect matching of the graph $G'$.
Then, for each $E^e$, exactly one of the following holds: 
\begin{itemize}[\textup{(a)}]
\item[\textup{(a)}] $e^{\textnormal{mid}} \in M'$ and $e^u, e^v \notin M'$
\item[\textup{(b)}] $e^{\textnormal{mid}} \notin M'$ and $e^u, e^v \in M'$
\end{itemize}
\end{lemma}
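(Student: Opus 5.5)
The plan is a local counting argument at the two subdivision vertices $u^e$ and $v^e$. The key structural fact about $G'$ is that each of these vertices has degree exactly $2$: $u^e$ is incident only to $e^u$ and $e^{\textnormal{mid}}$, and $v^e$ is incident only to $e^{\textnormal{mid}}$ and $e^v$. This holds because the subdivision creates fresh vertices for each edge, and no other edge of $E(G')$ touches $V^e$. Since $M'$ is a perfect matching, each of $u^e$ and $v^e$ is covered by exactly one edge of $M'$.

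I will then split into two cases according to whether $e^{\textnormal{mid}}\in M'$.
\begin{itemize}
\item If $e^{\textnormal{mid}}\in M'$, then $u^e$ is already covered. Since $M'$ is a matching, $e^u\notin M'$. Symmetrically, $v^e$ is covered, so $e^v\notin M'$. This is case (a).
\item If $e^{\textnormal{mid}}\notin M'$, then the only remaining edge covering $u^e$ is $e^u$, so $e^u\in M'$ by perfection. Likewise $e^v\in M'$. This is case (b).
\end{itemize}

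Finally, the two cases are mutually exclusive because they disagree on whether $e^{\textnormal{mid}}\in M'$. Hence exactly one of (a) and (b) holds.

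There is no real obstacle here. The only point needing care is to state explicitly that $\deg_{G'}(u^e)=\deg_{G'}(v^e)=2$ with the incident edges listed above, which follows directly from the definition of $E(G')$ as a disjoint union over $e\in E(G)$ of the sets $E^e$.
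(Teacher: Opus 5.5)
Your proof is correct and matches the paper's argument: a case split on whether $e^{\textnormal{mid}}\in M'$, using the shared endpoints in one case and the degree-$2$ subdivision vertices $u^e, v^e$ in the other. Your explicit remarks on $\deg_{G'}(u^e)=\deg_{G'}(v^e)=2$ and on mutual exclusivity only spell out what the paper leaves implicit.
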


\begin{proof}
If the edge $e^{\textnormal{mid}}$ is contained in the perfect matching $M'$, then the edges $e^u$ and $e^v$ cannot be contained in $M'$ since both $e^u$ and $e^v$
share one of the endpoints of $e^{\textnormal{mid}}$.
If the edge $e^{\textnormal{mid}}$ is not contained in $M'$, then removing $e^{\textnormal{mid}}$ makes the degrees of vertices $u^e$ and $v^e$ equal to one.
Hence, by the property of a perfect matching, the edges $e^u$ and $e^v$ must be contained in $M'$.
Therefore, Lemma~\ref{lem:prop_M} holds.
\end{proof}

\begin{lemma} \label{lem:one-to-one}
There is a one-to-one correspondence between the perfect matchings of $G$ and those of $G'$.
\end{lemma}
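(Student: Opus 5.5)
We will construct an explicit bijection $\phi \colon \mathcal{M}(G) \to \mathcal{M}(G')$ and verify that it is well defined, injective, and surjective. For $M \in \mathcal{M}(G)$ define
\[
\phi(M) = \bigcup_{e \in M} E^e_{\textnormal{end}} \cup \bigcup_{e \in E(G)\setminus M} E^e_{\textnormal{mid}}.
\]
In words, an edge of $M$ is replaced by its two end segments (case (b) of Lemma~\ref{lem:prop_M}), and an edge outside $M$ is replaced by its middle segment (case (a)).

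\textbf{Well-definedness.} First we check that $\phi(M)$ is a perfect matching of $G'$. The vertices of $G'$ are of two kinds.
\begin{itemize}
\item A subdivision vertex $u^e$ or $v^e$ with $e \notin M$ is covered exactly once, by $e^{\textnormal{mid}}$.
\item A subdivision vertex with $e \in M$ is covered exactly once, by $e^u$ or $e^v$ respectively.
\item An original vertex $v \in V(G)$ is covered only by edges $e^v$ with $e \in M$ incident to $v$. Since $M$ is perfect, there is exactly one such $e$.
\end{itemize}
Hence $\phi(M) \in \mathcal{M}(G')$. Injectivity is immediate: $M$ is recovered from $\phi(M)$ as $\{e \in E(G) : e^u \in \phi(M)\}$.

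\textbf{Surjectivity.} Let $M' \in \mathcal{M}(G')$. By Lemma~\ref{lem:prop_M}, each $E^e$ falls into exactly one of the cases (a) or (b). Define $M = \{e \in E(G) : e^u, e^v \in M'\}$, that is, the set of edges in case (b). We must show $M$ is a perfect matching of $G$. Fix $v \in V(G)$. In $G'$, the edges incident to $v$ are exactly the end segments $e^v$ for $e$ incident to $v$ in $G$. Since $M'$ is perfect, exactly one such $e^v$ lies in $M'$. By Lemma~\ref{lem:prop_M}, $e^v \in M'$ holds precisely when $e \in M$. So exactly one edge of $M$ is incident to $v$, and $M \in \mathcal{M}(G)$.

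It remains to check $\phi(M) = M'$. Both sets agree on every gadget $E^e$: for $e \in M$ both contain $E^e_{\textnormal{end}}$, and for $e \notin M$ both contain $E^e_{\textnormal{mid}}$, again using the dichotomy of Lemma~\ref{lem:prop_M}.

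The only point needing care is this last argument: that the vertex-coverage condition at original vertices of $G'$ transfers exactly to the matching condition in $G$. This relies on Lemma~\ref{lem:prop_M} to exclude mixed configurations within a gadget. Everything else is routine.
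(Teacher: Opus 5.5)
Your proof is correct and follows the paper's argument: the same map $\phi$, a direct check that $\phi(M)$ is perfect, and a preimage $M$ built from the dichotomy of Lemma~\ref{lem:prop_M}, with $\phi(M)=M'$ verified gadget by gadget. Your explicit injectivity check (recovering $M$ as $\{e : e^u \in \phi(M)\}$) and your direct coverage check of the subdivision vertices are slightly tidier than the paper's version. The paper only verifies $\phi\circ\psi=\mathrm{id}$, and it invokes Lemma~\ref{lem:prop_M} for $\phi(M)$ before knowing that $\phi(M)$ is a perfect matching.
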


\begin{proof}
For any $M \in \mathcal{M}(G)$, we construct a mapping $\phi$ such that $\phi(M) \in \mathcal{M}(G')$, and show that $\phi$ is bijective.

For each edge $e \in E(G)$, define $\phi(M)$ as follows:
\begin{enumerate}[(a)]
    \item If $e \in M$, then $e^u, e^v \in \phi(M)$.
    \item If $e \notin M$, then $e^{\textnormal{mid}} \in \phi(M)$.
\end{enumerate}
Since $M$ is a perfect matching of $G$, every vertex $v \in V(G)$ is incident to exactly one edge $e \in M$.
By the definition of $\phi$, every vertex $v \in V(G') \setminus V_{\mathcal{G}}$ is incident to exactly one edge $e^v \in \phi(M)$.
Moreover, by Lemma~\ref{lem:prop_M}, each subdivided vertex in $V_{\mathcal{G}}$ is also incident to exactly one edge of $\phi(M)$.
Hence, every vertex of $G'$ is incident to exactly one edge of $\phi(M)$, and therefore $\phi(M)$ is a perfect matching of $G'$.
Figure~\ref{fig:phiM_is_PM} illustrates that every vertex of $G'$ is incident to exactly one edge in $\phi(M)$.
\begin{figure}
    \centering
    \includegraphics[scale=0.45]{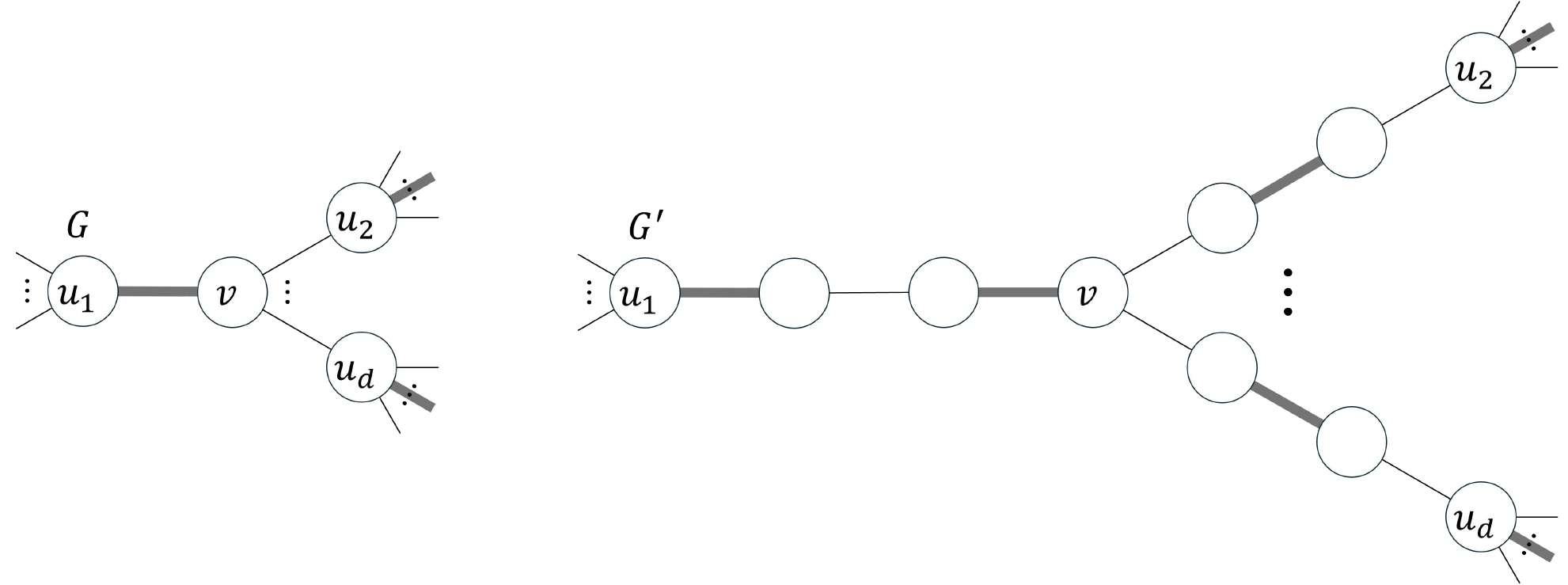}
    \caption{Every vertex of $G'$ is incident to exactly one edge in $\phi(M)$}
    \label{fig:phiM_is_PM}
\end{figure}
Thus, $\phi$ is a mapping from a perfect matching of $G$ to a perfect matching of $G'$.

Let $M' \in \mathcal{M}(G')$.
By Lemma~\ref{lem:prop_M}, for each $E^e$, $M'$ satisfies exactly one of the following:
\begin{enumerate}[(i)]
    \item $e^u, e^v \in M'$ (i.e., $e^{\textnormal{mid}} \notin M'$),
    \item $e^u, e^v \notin M'$ (i.e., $e^{\textnormal{mid}} \in M'$).
\end{enumerate}
Define a mapping $\psi$ as follows:
\begin{enumerate}[(a')]
    \item If $e^{\textnormal{mid}} \in M'$, then $e \notin \psi(M')$.
    \item If $e^{\textnormal{mid}} \notin M'$, then $e \in \psi(M')$.
\end{enumerate}

Suppose that $\psi(M')$ is not a perfect matching.
Then there exists a vertex $v \in V(G)$ such that one of the following holds:
\begin{enumerate}[(1)]
    \item No edge in $\psi(M')$ is incident to $v$.
    \item At least two edges in $\psi(M')$ are incident to $v$.
\end{enumerate}
Let $E_v$ denote the set of edges incident to $v$.
In case (1), by the definition of $\psi$, we have $e^{\textnormal{mid}} \in M'$ for all $e \in E_v$, which implies that no matching edge is incident to $v$ in $G'$, contradicting the fact that $M'$ is a perfect matching.
In case (2), let $e_1, e_2 \in E_v$ be two matching edges.
By the definition of $\psi$, the edges $e_1^v$ and $e_2^v$ are both matching edges incident to $v$ in $G'$, contradicting the matching property.
Hence, $\psi(M')$ is a perfect matching of $G$.

Now consider $\phi(\psi(M'))$.
If $e^{\textnormal{mid}} \in M'$, then by the construction of $\psi$, we have $e \notin \psi(M')$.
The edges $e^u, e^v \not\in\phi(\psi(M'))$ from the construction of $\phi$, and then $e^{\textnormal{mid}} \in \phi(\psi(M'))$ from Lemma~\ref{lem:prop_M}-(a).
Similarly, if $e^{\textnormal{mid}} \notin M'$, then by the construction of $\psi$, we have $e \in \psi(M')$.
The edges $e^u, e^v \in \phi(\psi(M'))$ from the construction of $\phi$, and then $e^{\textnormal{mid}} \notin \phi(\psi(M'))$ from Lemma~\ref{lem:prop_M}-(b).
Therefore, $\phi(\psi(M')) = M'$, and hence $\psi = \phi^{-1}$.
Thus, $\phi$ is bijective.
\end{proof}

\begin{lemma} \label{lem:replace_A}
For any perfect matching $M' \in \mathcal{M}(G')$ and any anti-forcing set $A$ of $M'$, there exists an anti-forcing set $A'$ of $M'$ such that $|A'| \le |A|$ and $A'\subseteq E_{\textnormal{mid}}$.
\end{lemma}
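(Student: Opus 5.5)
We construct $A'$ from $A$ one edge at a time, replacing every edge of $A \cap E_{\textnormal{end}}$ by a middle edge while preserving the anti-forcing property and never increasing the size. An anti-forcing set satisfies $A \subseteq E(G') \setminus M'$, so any end edge $e^u \in A$ (the case $e^v$ is symmetric) is not in $M'$. By Lemma~\ref{lem:prop_M}, this means $E^e$ is in case (a): $e^{\textnormal{mid}} \in M'$ and $e^u, e^v \notin M'$. The candidate replacement $e^{\textnormal{mid}}$ therefore lies in $M'$ and cannot be used. The replacement must instead be a middle edge of some other gadget $E^f$ that is not in $M'$.

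A cleaner route is to show directly that in case (a) the end edges of $E^e$ are redundant, so they can simply be dropped from $A$. Let $H = G' - A$, which has $M'$ as its unique perfect matching. Put $A_0 = A \setminus E_{\textnormal{end}}$ and $H_0 = G' - A_0$. We claim that $M'$ is still the unique perfect matching of $H_0$. Suppose $N \neq M'$ is another perfect matching of $H_0$. Then $M' \triangle N$ is a disjoint union of $M'$-alternating cycles in $H_0$, and we take one such cycle $C$. If $C$ contains no edge of $A \cap E_{\textnormal{end}}$, then $C$ lies in $H$, and $M' \triangle C$ is a second perfect matching of $H$, a contradiction. So $C$ uses some removed end edge $e^u$ with $e^{\textnormal{mid}} \in M'$. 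Because $u^e$ has degree $2$, the cycle $C$ passes through $u^e$ and must use both $e^u$ and $e^{\textnormal{mid}}$; it then continues to $v^e$ and leaves through $e^v$. So $C$ traverses the whole path $u, u^e, v^e, v$, using $e^u$ and $e^v$ as non-$M'$ edges.

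This is exactly where the argument breaks: the full path cannot be avoided, so dropping end edges alone does not work. The fix is to handle each gadget $e$ with $A \cap E^e \neq \emptyset$ according to its case.
\begin{itemize}
\item In case (a), where $e^{\textnormal{mid}} \in M'$, any alternating cycle in $G'$ using $e^u$ also uses $e^v$, and conversely. Removing $e^u$ or $e^v$ therefore destroys precisely the alternating cycles through the path $u, u^e, v^e, v$. In $G$ these correspond under $\psi$ (Lemma~\ref{lem:one-to-one}) to the $M$-alternating cycles through the non-matching edge $e$.
\item Such cycles also pass through some other gadget $E^f$ with $f \in M$, that is, $f^u, f^v \in M'$ and $f^{\textnormal{mid}} \notin M'$ (case (b)). Removing $f^{\textnormal{mid}}$ destroys alternating cycles through $f$, but that is a different family of cycles from those through $e$.
\end{itemize}

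The correct replacement comes from the alternating structure of cycles in $G'$. Every $M'$-alternating cycle in $G'$ is the subdivision of an $M$-alternating cycle in $G$. Each non-matching edge $e$ of $G$ contributes the non-$M'$ edges $e^u$ and $e^v$, and each matching edge $f$ contributes $f^{\textnormal{mid}}$. Dually, in case (b), where $e^u, e^v \in M'$, the only non-$M'$ edge of $E^e$ is $e^{\textnormal{mid}}$, so $A \cap E^e \subseteq \{e^{\textnormal{mid}}\}$ and there is nothing to replace. The remaining task is case (a), where the removed end edge blocks the cycles through the non-matching edge $e$.

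Here the plan is to choose the replacement among the middle edges of the matching edges adjacent to $e$. Every alternating cycle through the subdivided $e$ passes $u$, where it uses $g^u \in M'$ for the unique $g \in M$ incident to $u$, and hence also uses $g^{\textnormal{mid}}$. Every such cycle therefore uses $g^{\textnormal{mid}} \in E_{\textnormal{mid}} \setminus M'$. Replacing $e^u$ by $g^{\textnormal{mid}}$, and deleting duplicates, gives a set $A'$ with $|A'| \le |A|$ and $A' \subseteq E_{\textnormal{mid}}$.

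Uniqueness of the perfect matching is preserved. Any $M'$-alternating cycle in $G' - A'$ avoids all middle edges of $A'$. Every edge of $A \cap E_{\textnormal{end}}$ was replaced by a middle edge that meets all cycles through it, so such a cycle also avoids $A$ and would already be alternating in $G' - A$, a contradiction. The step that needs care is this cycle-covering claim. It rests on the degree-$2$ argument above, that an alternating cycle through $e^u$ must traverse $g^u$ and then $g^{\textnormal{mid}}$, together with the standard fact that $M'$ is unique in a graph if and only if there is no $M'$-alternating cycle.
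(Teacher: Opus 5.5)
Your final argument is correct and matches the paper's: each end edge $e^u \in A$ (necessarily from a gadget with $e^{\textnormal{mid}} \in M'$) is replaced by $g^{\textnormal{mid}}$, where $g^u$ is the $M'$-edge at the original vertex $u$. Your alternating-cycle check, that any $M'$-alternating cycle through $e^u$ must take $g^u$ and then $g^{\textnormal{mid}}$ at the degree-$2$ vertex $u^g$, supplies the verification the paper only calls ``easy to verify''. You can delete the abandoned ``drop the end edges'' detour and the side remarks about projected cycles without losing anything.
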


\begin{proof}
If $A \subseteq E_\textnormal{mid}$, the proof is done.
Otherwise, there exists an edge $e \in A \cap E_{\textnormal{end}}$.
This implies that $e \notin M'$ and there exists a vertex $v$, an endpoint of $e$, and an edge $f^v \in E_{\textnormal{end}}$ incident to $v$ such that $f^v \in M'$.
Since $M'$ is a perfect matching, $f^{\textnormal{mid}} \in E^f_{\textnormal{mid}}$ is not in $M'$.
We replace $e$ in $A$ with $f^{\textnormal{mid}}$:
\[
A' = (A \setminus \{e\})\cup \{f^{\textnormal{mid}}\}.
\]
Since $f^{\textnormal{mid}}$ may be included in $A'$, $|A'| \le |A|$ holds.
It is easy to verify that $A'$ forces $f^{\textnormal{mid}} \notin M'$, $f^v \in M'$, and $e \notin M'$.
Therefore, $A$ and $A'$ force the same perfect matching $M'$.
We replace $A$ with $A'$ and repeat this operation until there exists no edge in $A'\cap E_{\textnormal{end}}$.
Finally, we obtain an {\AFS} $A'$ of $M'$ such that $A' \cap E_{\textnormal{end}} = \emptyset$, i.e., $A'\subseteq E_\textnormal{mid}$.
Hence, the lemma holds.
\end{proof}
Figure \ref{fig:Anti_mid} illustrates the replacement of an edge $e$ in an anti-forcing set by the corresponding edge $f^{\textnormal{mid}}$.
\begin{figure}
    \centering
    \includegraphics[scale=0.35]{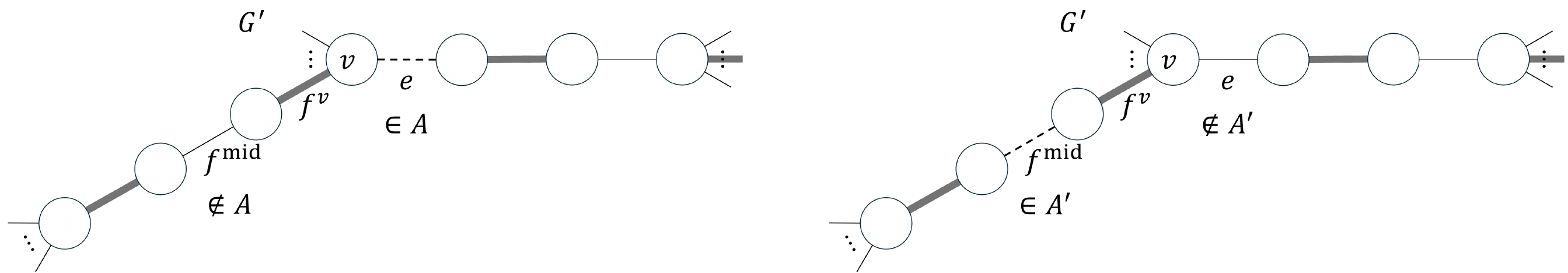}
    \caption{The replacement of an edge}
    \label{fig:Anti_mid}
\end{figure}

\begin{theorem}\label{thm:FGPMtoAFGPM_reduction}
There exists a polynomial-time reduction from {\FGPM} to {\AFGPM}.
\end{theorem}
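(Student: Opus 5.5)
Given an instance $(G, M, k)$ of {\FGPM}, I will output $(G', \phi(M), k)$, where $G'$ is obtained by subdividing every edge of $G$ as in Section~\ref{sec:seg}, and $\phi$ is the bijection of Lemma~\ref{lem:one-to-one}. The construction is clearly polynomial: $G'$ has $|V(G)|+2|E(G)|$ vertices and $3|E(G)|$ edges. Bipartiteness is preserved, since subdividing each edge into a path of length three keeps parity. The maximum degree is also preserved, since original vertices keep their degree and new vertices have degree~$2$. It remains to show that $M$ has a {\FS} of size at most $k$ in $G$ if and only if $\phi(M)$ has an {\AFS} of size at most $k$ in $G'$.

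The key correspondence is between edges $e \in M$ and edges $e^{\textnormal{mid}} \in E_{\textnormal{mid}}$ with $e^{\textnormal{mid}} \notin \phi(M)$. Given $F \subseteq M$, set $A = \{e^{\textnormal{mid}} : e \in F\}$; then $A \subseteq E(G')\setminus\phi(M)$ because $e\in M$ implies $e^{\textnormal{mid}}\notin\phi(M)$. Deleting $e^{\textnormal{mid}}$ from $G'$ forces $e^u,e^v$ into every perfect matching by the degree-one argument of Lemma~\ref{lem:prop_M}. So the perfect matchings of $G'-A$ are exactly the matchings $\phi(N)$ with $N\in\mathcal{M}(G)$ and $F\subseteq N$. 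By injectivity of $\phi$, $G'-A$ has the unique perfect matching $\phi(M)$ iff $F$ is a forcing set of $M$. Formally, I will show that $\phi(N)$ avoids $e^{\textnormal{mid}}$ iff $e\in N$, using the definition of $\phi$ and Lemma~\ref{lem:prop_M}. This gives the forward direction with $|A|=|F|$.

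For the converse, I take any anti-forcing set $A$ of $\phi(M)$ with $|A|\le k$. By Lemma~\ref{lem:replace_A} I may assume $A\subseteq E_{\textnormal{mid}}$ with no increase in size. Every $e^{\textnormal{mid}}\in A$ is not in $\phi(M)$, so the corresponding $e$ lies in $M$. Define $F=\{e : e^{\textnormal{mid}}\in A\}\subseteq M$, so $|F|=|A|\le k$. By the same equivalence as above, the perfect matchings of $G'-A$ are the $\phi(N)$ with $F\subseteq N$. Uniqueness of $\phi(M)$ then gives uniqueness of $M$ among perfect matchings containing $F$, by bijectivity of $\phi$.

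The main obstacle is the converse, because an arbitrary anti-forcing set may use end edges; this is exactly handled by Lemma~\ref{lem:replace_A}, which I invoke directly. The only other delicate point is the precise equivalence that $e^{\textnormal{mid}}\notin\phi(N)$ iff $e\in N$, which follows from the definitions.
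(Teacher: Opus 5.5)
Your proposal is correct and follows essentially the same route as the paper. It uses the same subdivision instance $(G',\phi(M),k)$, the correspondence $e\mapsto e^{\textnormal{mid}}$ between forcing-set edges and anti-forcing-set edges, and Lemma~\ref{lem:replace_A} to move an arbitrary anti-forcing set into $E_{\textnormal{mid}}$ for the converse. Your characterization of the perfect matchings of $G'-A$ as $\{\phi(N) : N\in\mathcal{M}(G),\ F\subseteq N\}$ is a direct form of the paper's contradiction argument. Your checks that $A\subseteq E(G')\setminus\phi(M)$ and $F\subseteq M$ fill in small details the paper leaves implicit.
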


\begin{proof}
We transform an instance $(G, M, k)$ of {\FGPM} into an instance $(G', \phi(M), k)$ of {\AFGPM}.
The graph $G'$ is constructed from $G$ by subdividing all edges of $G$, and $\phi$ is the mapping defined in Lemma~\ref{lem:one-to-one}.
Since each edge $e \in E(G)$ is replaced with two vertices and three edges, the transformation from $G$ to $G'$ runs in $O(|E(G)|)$ time.

We show that the following statements are equivalent:
\begin{enumerate}[(i)]
    \item There exists a {\FS} $F$ of $M$ in $G$ with $|F| \le k$.
    \item There exists an {\AFS} $A$ of $\phi(M)$ in $G'$ with $|A| \le k$.
\end{enumerate}

\noindent (i)$\Rightarrow$(ii):
For each edge $e \in F$, include the edge $e^{\textnormal{mid}} \in E^e_{\textnormal{mid}}$ in $A$.
We show that the resulting set $A$ is an {\AFS} of $\phi(M)$.
Suppose that $A$ is not an {\AFS} of $\phi(M)$.
Then there exists a perfect matching $M' \neq \phi(M)$ such that $A \subseteq E(G') \setminus M'$.
Since $\phi$ is bijective, we have $\phi^{-1}(M') \neq \phi^{-1}(\phi(M)) = M$ such that $F \subseteq \phi^{-1}(M')$, which contradicts the fact that $F$ is a {\FS} of $M$.
Hence, $A$ is an {\AFS} of $\phi(M)$, and $|A| = |F| \le k$.
Therefore, (i)$\Rightarrow$(ii) holds.\\

\noindent (ii)$\Rightarrow$(i):
Let $A$ be an {\AFS} that uniquely determines a perfect matching $M'$ of $G'$.
By Lemma~\ref{lem:replace_A}, there exists another {\AFS} $A' \subseteq E_{\textnormal{mid}}$ of $M'$ such that $|A'| \le |A|$.
We construct a {\FS} $F$ of the perfect matching $\phi^{-1}(M')$ of $G$ as follows: for each $e^{\textnormal{mid}} \in A'$, include $e$ in $F$.
Suppose that $F$ is not a {\FS} of $\phi^{-1}(M')$.
Then there exists a perfect matching $M'' \neq \phi^{-1}(M')$ such that $F \subseteq M''$.
Since $\phi$ is bijective, we have $\phi(M'') \neq \phi(\phi^{-1}(M')) = M'$ such that $A' \subseteq E(G') \setminus \phi(M'')$, which contradicts the fact that $A'$ is an {\AFS} of $M'$.
Hence, $F$ is a {\FS} of $\phi^{-1}(M')$, and $|F| = |A'| \le |A| \le k$.
Therefore, (ii)$\Rightarrow$(i) holds.
\end{proof}

\begin{theorem} \label{thm:FPMtoAFPM_reduction}
There exists a polynomial-time reduction from {\FPM} to {\AFPM}.
\end{theorem}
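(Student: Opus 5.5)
We map an instance $(G,k)$ of {\FPM} to the instance $(G',k)$ of {\AFPM}, where $G'$ is obtained from $G$ by subdividing every edge exactly as in Section~\ref{sec:seg}. As in the proof of Theorem~\ref{thm:FGPMtoAFGPM_reduction}, the construction takes $O(|E(G)|)$ time. The whole argument reuses the bijection $\phi$ of Lemma~\ref{lem:one-to-one} and the per-matching equivalence already proved inside Theorem~\ref{thm:FGPMtoAFGPM_reduction}. The key observation is that the equivalence established there, for a fixed perfect matching, holds for \emph{every} perfect matching $M$ of $G$: the minimum {\FS} of $M$ in $G$ has the same size as the minimum {\AFS} of $\phi(M)$ in $G'$.

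For the forward direction, suppose there is a perfect matching $M$ of $G$ with a {\FS} $F$, $|F|\le k$. The (i)$\Rightarrow$(ii) argument of Theorem~\ref{thm:FGPMtoAFGPM_reduction} applies verbatim. Take $A=\{e^{\textnormal{mid}} : e\in F\}$. Then $A$ is an {\AFS} of $\phi(M)$ with $|A|=|F|\le k$, so $(G',k)$ is a yes-instance. One small check is needed: $A\subseteq E(G')\setminus\phi(M)$. This holds because $e\in M$ implies $e^{\textnormal{mid}}\notin\phi(M)$ by the definition of $\phi$.

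For the backward direction, suppose some perfect matching $M'$ of $G'$ has an {\AFS} $A$ with $|A|\le k$. First apply Lemma~\ref{lem:replace_A} to obtain $A'\subseteq E_{\textnormal{mid}}$ with $|A'|\le|A|$, still forcing $M'$. Since $\phi$ is a bijection, $M'=\phi(M)$ for $M=\phi^{-1}(M')$. Set $F=\{e : e^{\textnormal{mid}}\in A'\}$. Because $A'\cap M'=\emptyset$, each such $e^{\textnormal{mid}}$ lies outside $M'$, so by Lemma~\ref{lem:prop_M} and the definition of $\psi$ we get $e\in M$; hence $F\subseteq M$. The contradiction argument from Theorem~\ref{thm:FGPMtoAFGPM_reduction} then shows that $F$ is a {\FS} of $M$. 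Indeed, any other perfect matching $M''\supseteq F$ would give $\phi(M'')\ne M'$ with $A'\cap\phi(M'')=\emptyset$, since $e\in M''$ implies $e^{\textnormal{mid}}\notin\phi(M'')$. This contradicts uniqueness.

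There is no real obstacle. The only point needing care is the one just noted: the matching in the {\AFPM} instance is existentially quantified and not given, so we must use surjectivity of $\phi$ (Lemma~\ref{lem:one-to-one}) to pull $M'$ back to a perfect matching of $G$. We also verify the containments $F\subseteq M$ and $A\subseteq E(G')\setminus\phi(M)$ required by the definitions of {\FS} and {\AFS}.
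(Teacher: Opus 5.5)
Your proof is correct and follows the paper's approach: you use the same subdivided instance $(G',k)$ and obtain both directions by rerunning the argument of Theorem~\ref{thm:FGPMtoAFGPM_reduction} through the bijection $\phi$ and Lemma~\ref{lem:replace_A}. You also spell out details the paper leaves implicit, namely using surjectivity of $\phi$ to pull back the existentially quantified $M'$, and checking that $F\subseteq M$ and $A\cap\phi(M)=\emptyset$.
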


\begin{proof}
We transform an instance $(G, k)$ of {\FPM} into an instance $(G', k)$ of {\AFPM}.
The graph $G'$ is constructed from $G$ by subdividing all edges of $G$.
Since the transformation is identical to that in Theorem~\ref{thm:FGPMtoAFGPM_reduction}, it can be done in $O(|E(G)|)$ time.

We show that the following statements are equivalent.
Let $\phi$ be the bijection defined in Lemma~\ref{lem:one-to-one}.
\begin{enumerate}[(i)]
    \item There exists a perfect matching $M$ of $G$ that has a {\FS} of size at most $k$.
    \item There exists a perfect matching $\phi(M)$ of $G'$ that has an {\AFS} of size at most $k$.
\end{enumerate}

The proof proceeds in the same way as that of Theorem~\ref{thm:FGPMtoAFGPM_reduction}, where $M$ is the perfect matching uniquely determined by the {\FS} $F$, and $\phi(M)$ is the perfect matching uniquely determined by the {\AFS} $A$.
\end{proof}

\subsection{{\NP}-completeness of {\sc\bfseries Anti-ForcingGPM} on Bipartite Graphs of Maximum Degree~3}

Let $G'$ be the graph obtained from a graph $G$ by applying the edge subdivision described in Section~\ref{sec:seg}.
Then the following lemma holds.

\begin{lemma}\label{lem:bipartite}
    Let $k \ge 2$.
    If $G$ is a bipartite graph of maximum degree~$k$, then $G'$ is also a bipartite graph of maximum degree~$k$.
\end{lemma}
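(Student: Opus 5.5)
The plan is to check the two properties separately, starting with bipartiteness. Let $(X,Y)$ be a bipartition of $G$. For each edge $e=\{u,v\}$ with $u\in X$, $v\in Y$, the subdivision creates the path $u, u^e, v^e, v$. I would use the bipartition
\[
X' = X \cup \{v^e : e \in E(G)\}, \qquad Y' = Y \cup \{u^e : e \in E(G)\},
\]
that is, $u^e$ goes to the side opposite $u$ and $v^e$ to the side opposite $v$. Then $e^u=\{u,u^e\}$ joins $X$ to $Y'$, $e^{\textnormal{mid}}=\{u^e,v^e\}$ joins $Y'$ to $X'$, and $e^v=\{v^e,v\}$ joins $X'$ to $Y$. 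Since every edge of $G'$ has one of these three forms, $G'$ is bipartite.

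Equivalently, each edge of $G$ becomes a path of odd length $3$, so the parity of every cycle length is preserved. The only detail to watch is naming: the endpoint called $u$ must be the one in $X$. This is harmless, because the labels $u$ and $v$ of $e$ may be chosen freely.

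For the degree bound, each original vertex $w\in V(G)$ loses its edges in $E(G)$ and gains exactly one edge $e^w$ for each incident edge $e$. Hence $\deg_{G'}(w)=\deg_G(w)\le k$. Each new vertex $u^e$ or $v^e$ has degree exactly $2$, since it lies only on $e^u, e^{\textnormal{mid}}$ or on $e^{\textnormal{mid}}, e^v$. Since $k\ge 2$, every vertex of $G'$ has degree at most $k$.

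If the statement is read as "maximum degree exactly $k$", this also follows. A vertex of degree $k$ in $G$ keeps degree $k$ in $G'$, and the new vertices have degree $2\le k$.

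There is no real obstacle here. The only place where any care is needed is the hypothesis $k\ge 2$, which is what accounts for the degree-$2$ subdivision vertices.
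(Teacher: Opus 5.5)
Your proposal is correct and takes the same approach as the paper's proof. Original vertices keep their degree and each subdivision vertex has degree $2\le k$. Your bipartition, with $u^e$ placed on the side of $v$ and $v^e$ on the side of $u$, is exactly the paper's proper $2$-coloring of $G'$.
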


\begin{proof}
    In the transformation from $G$ to $G'$, each edge $e$ is subdivided by adding only two vertices.
    Therefore, the degrees of the original vertices remain unchanged, and the degrees of the added vertices are at most two.
    Hence, if the maximum degree of $G$ is $k$, then the maximum degree of $G'$ is also $k$.

    Since $G$ is bipartite, there exists a proper $2$-coloring of $G$.
    Thus, for any edge $e=\{u,v\}$, the vertices $u$ and $v$ are assigned different colors.
    For the subdivision of $e$, let $u_e$ and $v_e$ be the two newly added vertices.
    We color $u_e$ with the same color as $v$, and color $v_e$ with the same color as $u$.
    With this coloring, $G'$ admits a proper $2$-coloring.
    Therefore, if $G$ is bipartite, then $G'$ is also bipartite.
\end{proof}

\begin{theorem}\label{AFGPM_NPc}
    {\AFGPM} on bipartite graphs of maximum degree~$3$ is {\NPc}.
\end{theorem}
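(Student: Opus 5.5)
The plan has two parts: membership in {\NP}, which is given directly by Lemma~\ref{lem:AFPM_in_NP}, and {\NP}-hardness, which comes from chaining known results. For hardness, the source problem is {\FGPM} on bipartite graphs of maximum degree~$3$, which Adams, Mahdian, and Mahmoodian~\cite{adams2004forced} proved {\NPc}. We reduce it to {\AFGPM} using the reduction of Theorem~\ref{thm:FGPMtoAFGPM_reduction}.

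Given an instance $(G, M, k)$ of {\FGPM} with $G$ bipartite and of maximum degree~$3$, construct $G'$ by subdividing every edge and output the instance $(G', \phi(M), k)$. Theorem~\ref{thm:FGPMtoAFGPM_reduction} shows three things about this instance:
\begin{itemize}
\item[(1)] it can be computed in $O(|E(G)|)$ time;
\item[(2)] $\phi(M)$ is a perfect matching of $G'$, by Lemma~\ref{lem:one-to-one};
\item[(3)] $M$ has a {\FS} of size at most $k$ if and only if $\phi(M)$ has an {\AFS} of size at most $k$.
\end{itemize}

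It remains to check that $G'$ stays inside the restricted class. Applying Lemma~\ref{lem:bipartite} with degree bound $3$ shows that $G'$ is bipartite with maximum degree~$3$. One minor point is that $G$ might have maximum degree less than $3$. This is harmless, because the subdivision never raises degrees: original vertices keep their degree and new vertices have degree $2$, so the bound ``at most $3$'' carries over.

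There is no real obstacle here, since every ingredient has already been established. The only care needed is to confirm that the reduction in Theorem~\ref{thm:FGPMtoAFGPM_reduction} preserves the degree and bipartiteness restrictions, which is exactly what Lemma~\ref{lem:bipartite} provides. Combining membership in {\NP} with this hardness reduction shows that {\AFGPM} is {\NPc} on bipartite graphs of maximum degree~$3$.
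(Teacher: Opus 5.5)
Your proposal is correct and essentially the paper's proof: membership in {\NP} from Lemma~\ref{lem:AFPM_in_NP}, and hardness by combining Adams et al.'s result with the subdivision reduction of Theorem~\ref{thm:FGPMtoAFGPM_reduction} and the class-preservation Lemma~\ref{lem:bipartite}. Your remark on inputs of maximum degree below $3$ is a small, valid tightening, since Lemma~\ref{lem:bipartite} as stated only covers maximum degree exactly $k \ge 2$.
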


\begin{proof}
    By Lemma~\ref{lem:AFPM_in_NP}, {\AFGPM} belongs to {\NP}.
    Adams et al.~\cite{adams2004forced} showed that {\FGPM} on bipartite graphs of maximum degree~$3$ is {\NPc}.
    Moreover, by Theorem~\ref{thm:FGPMtoAFGPM_reduction} and Lemma~\ref{lem:bipartite}, there exists a polynomial-time reduction from {\FGPM} on bipartite graphs of maximum degree~$3$ to {\AFGPM} on bipartite graphs of maximum degree~$3$.
    Therefore, {\AFGPM} on bipartite graphs of maximum degree~$3$ is {\NPh}.
    Consequently, the theorem holds.
\end{proof}

\section{{\NP}-completeness of {\sc\bfseries Anti-ForcingPM} on Bipartite Graphs of Maximum Degree~3}
\label{5_3_FPM}

We prove that {\FPM} is {\NPc} on bipartite graphs of maximum degree~$3$.
Combining this result with Theorem~\ref{thm:FPMtoAFPM_reduction} and Lemma~\ref{lem:bipartite}, we prove that {\AFPM} remains {\NPc} even when the input graph is restricted to bipartite graphs of maximum degree~$3$.

First, we transform an instance $(G,k)$ of {\FPM} on a bipartite graph of maximum degree~$5$ into an instance $(G',k)$ of {\FPM} on a bipartite graph of maximum degree~$3$.
For vertices whose degree is at least $4$, we apply the transformations shown in Figures \ref{figure:5_3} and~\ref{figure:4_3}.

\begin{figure}
    \centering
    \includegraphics[scale=0.45]{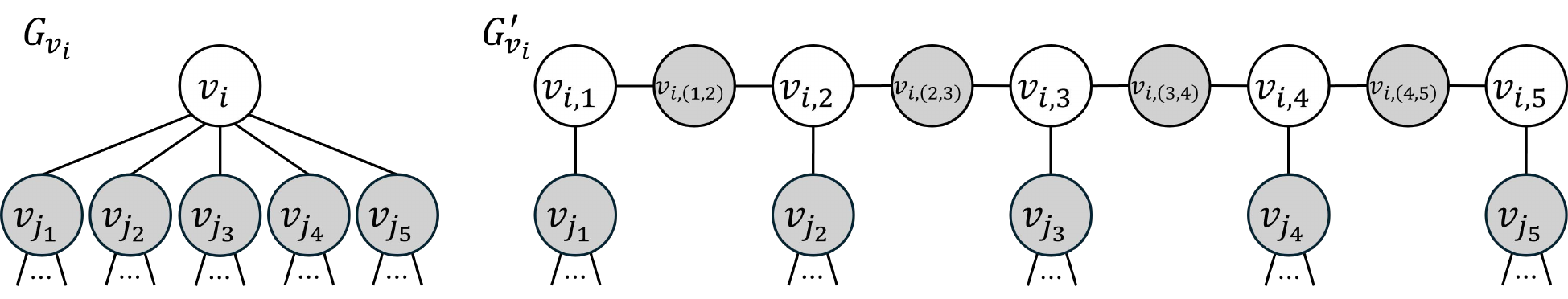}
    \caption{An example of transforming a graph $G_{v_i}$ into a graph $G'_{v_i}$ when the degree of vertex $v_i$ is $5$}
    \label{figure:5_3}
    \centering
    \includegraphics[scale=0.5]{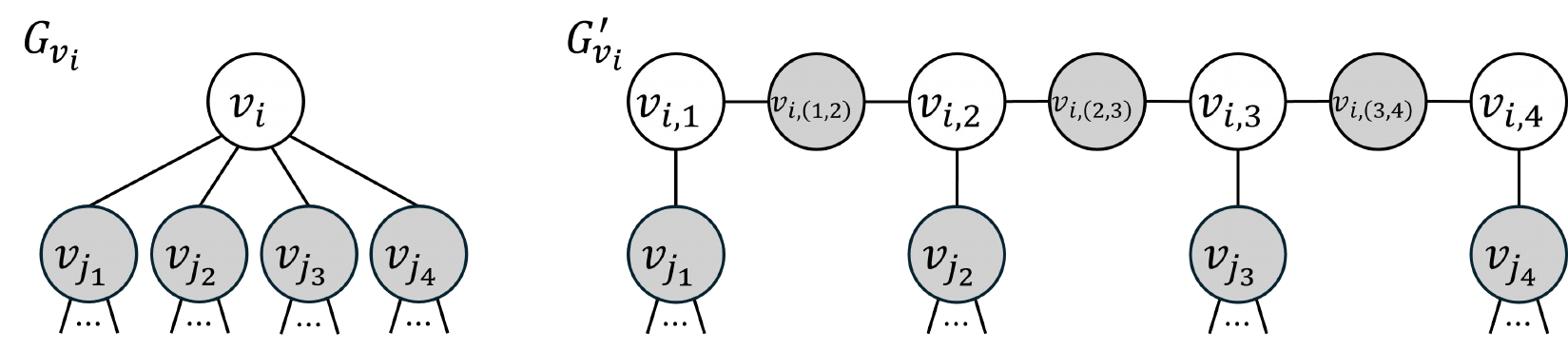}
    \caption{An example of transforming a graph $G_{v_i}$ into a graph $G'_{v_i}$ when the degree of vertex $v_i$ is $4$}
    \label{figure:4_3}
\end{figure}

Let $v_i$ be a vertex of degree~$5$ in graph $G$.
Let $G_{v_i}$ denote the subgraph consisting of the vertex $v_i$, its adjacent vertices $v_{j_s}$ $(1 \leq s \leq 5)$, and the edges $\{v_i, v_{j_s}\}$.
Let $G'_{v_i}$ denote the graph obtained from $G_{v_i}$ by the following transformation.
In $G'_{v_i}$, we introduce vertices $v_{i,s}$ that are copies of $v_i$, and connect each vertex $v_{i,s}$ to $v_{j_s}$ by an edge.
Next, we add vertices $v_{i,(t,t+1)}$ $(1 \leq t \leq 4)$ to $G'_{v_i}$ and connect each vertex $v_{i,(t,t+1)}$ to $v_{i,t}$ and $v_{i,t+1}$ by edges.
Through this process, $G_{v_i}$ is transformed into $G'_{v_i}$ (See Figure \ref{figure:5_3}).
The same transformation is applied to any vertex of degree~$4$ (See Figure \ref{figure:4_3}).

The following lemma holds for the graph $G'$ transformed from the graph $G$.

\begin{lemma} \label{lem:5to3_bipartite}
    If $G$ is a bipartite graph of maximum degree~$5$, then $G'$ is a bipartite graph of maximum degree~$3$.
\end{lemma}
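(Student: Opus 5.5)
The claim has two parts, a degree bound and bipartiteness, and I will check each directly from the construction of $G'_{v_i}$. The construction only alters the neighbourhoods of vertices of degree $4$ or $5$. Each such $v_i$ is replaced by the path-like structure $v_{i,1}, v_{i,(1,2)}, v_{i,2}, \dots, v_{i,(d-1,d)}, v_{i,d}$, where $d=\deg(v_i)$, and each $v_{i,s}$ is joined to $v_{j_s}$. I take the intended reading that all high-degree vertices are replaced simultaneously. So an edge $\{v_i,v_j\}$ of $G$ with both ends of high degree becomes an edge $\{v_{i,s},v_{j,s'}\}$ between the corresponding copies. Otherwise one endpoint is simply kept.

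\emph{Degree bound.} Vertices of $G$ with degree at most $3$ keep their degree, because each of their incident edges is replaced by exactly one edge to some copy. A copy $v_{i,s}$ is adjacent to one external vertex (the image of $v_{j_s}$) and to at most two connector vertices $v_{i,(s-1,s)}$ and $v_{i,(s,s+1)}$. Hence its degree is at most $3$, and it is $2$ for $s\in\{1,d\}$. Each connector $v_{i,(t,t+1)}$ has degree exactly $2$. Therefore the maximum degree of $G'$ is at most $3$.

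\emph{Bipartiteness.} I start from a proper $2$-colouring $c$ of $G$ and extend it. Each copy $v_{i,s}$ receives colour $c(v_i)$, and each connector $v_{i,(t,t+1)}$ receives the opposite colour. Connector edges then join vertices of opposite colours. Each edge $\{v_{i,s}, v_{j_s}\}$, or its copy-to-copy version, joins something coloured $c(v_i)$ to something coloured $c(v_{j_s})$, and $c(v_i)\ne c(v_{j_s})$ since $c$ is proper on $G$. Unchanged edges remain properly coloured. So $G'$ is bipartite.

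The only delicate step is pinning down the construction when two adjacent vertices both have degree at least $4$. Because copies inherit their original's colour and every edge of $G$ corresponds to exactly one edge of $G'$, the argument is uniform across all cases, and I do not expect any real obstacle.
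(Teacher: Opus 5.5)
Your proof is correct and follows the paper's own argument. You bound the degrees of the copies $v_{i,s}$ (at most $3$) and of the connector vertices $v_{i,(t,t+1)}$ (exactly $2$), and you extend a proper $2$-colouring of $G$ by giving each $v_{i,s}$ the colour of $v_i$ and each connector the opposite colour; your explicit treatment of edges joining two replaced high-degree vertices is a detail the paper leaves implicit.
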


\begin{proof}
    The reduction ensures that the degree of each vertex $v_{i,s}$ in $G'$ is at most three, while the vertices $v_{i,(t,t+1)}$ inserted between them have degree two.
    Therefore, the maximum degree of $G'$ is three.

    Next, we show that $G'$ is bipartite.
    Since $G$ is bipartite, there exists a proper $2$-coloring of $G$.
    In such a coloring, vertex $v_i$ and its adjacent vertices $v_{j_s}$ are assigned different colors.
    By coloring each vertex $v_{i,s}$ in $G'$ with the same color as $v_i$ in $G$, and coloring each vertex $v_{i,(t,t+1)}$ with the same color as $v_{j_s}$ in $G$, it is clear from Figure \ref{figure:5_3} that $G'$ admits a proper $2$-coloring.
    Hence, $G'$ is bipartite.
\end{proof}

The following property holds for perfect matchings of $G'$.

\begin{lemma}\label{lem:prop_M_2}
    Let $M'$ be an arbitrary perfect matching of the graph $G'$.
    Then $M'$ contains exactly one edge of the form $\{v_{i,s}, v_{j_s}\}$ in $G'_{v_i}$.
    Moreover, once such an edge is included, $M'$ is uniquely determined within $G'_{v_i}$.
\end{lemma}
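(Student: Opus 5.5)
The plan is a counting argument on the gadget $G'_{v_i}$ followed by a path-propagation argument. Let $d\in\{4,5\}$ be the degree of $v_i$. The gadget then has $d$ copy vertices $v_{i,1},\dots,v_{i,d}$ and $d-1$ connector vertices $v_{i,(t,t+1)}$. Each connector has degree two, and both of its neighbours are copy vertices inside the gadget. The copies induce, together with the connectors, a path
\[
v_{i,1},\, v_{i,(1,2)},\, v_{i,2},\, v_{i,(2,3)},\, \dots,\, v_{i,(d-1,d)},\, v_{i,d}.
\]
The only edges leaving the gadget are the pendant edges $\{v_{i,s},v_{j_s}\}$.

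\textbf{Step 1 (exactly one external edge).} In $M'$ every connector must be matched to a copy vertex, because connectors have no other neighbours. This uses up exactly $d-1$ distinct copy vertices. Exactly one copy vertex is left, and it must be covered by an edge that does not go to a connector. Its only such edge is its pendant edge $\{v_{i,s},v_{j_s}\}$. Conversely, if two pendant edges were in $M'$, only $d-2$ copies would remain for $d-1$ connectors, which is impossible. So $M'$ contains exactly one edge of the form $\{v_{i,s},v_{j_s}\}$.

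\textbf{Step 2 (uniqueness inside the gadget).} Fix the chosen index $s$. Removing $v_{i,s}$ splits the path into two even-length subpaths: one from $v_{i,1}$ to $v_{i,(s-1,s)}$, the other from $v_{i,(s,s+1)}$ to $v_{i,d}$. I then propagate from the ends. Since $v_{i,1}$ is not matched externally, it must take $v_{i,(1,2)}$. This forces $v_{i,2}$ to take $v_{i,(2,3)}$, and so on up to $v_{i,s-1}$ taking $v_{i,(s-1,s)}$. Symmetrically, $v_{i,d}$ takes $v_{i,(d-1,d)}$, and the forcing continues down to $v_{i,s+1}$ taking $v_{i,(s,s+1)}$. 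This is simply the fact that a path on an even number of vertices has a unique perfect matching. Hence the restriction of $M'$ to $G'_{v_i}$ is determined by $s$ alone.

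The only point that needs care is the claim in Step 1 that connectors have no neighbours outside the gadget. This follows from the construction: the $v_{i,(t,t+1)}$ are fresh vertices adjacent only to $v_{i,t}$ and $v_{i,t+1}$. A related point, if one wants it for the later equivalence with $G$, is that gadgets at adjacent high-degree vertices do not share these internal vertices. This also holds, because copies are created per original vertex and the edge $\{v_{i,s},v_{j_s}\}$ then attaches to the corresponding copy at $v_{j_s}$. The rest is routine counting.
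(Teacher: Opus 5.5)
Your proof is correct, but it decomposes the argument differently from the paper.

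The paper fixes a pendant edge $\{v_{i,s},v_{j_s}\}\in M'$ and propagates \emph{outward} from $v_{i,s}$. It splits into the cases $s\in\{1,5\}$ (one neighbouring connector) and $s\in\{2,3,4\}$ (two neighbouring connectors). In each case, a connector next to an already-matched copy is forced onto its other copy, which excludes that copy's pendant edge, and so on along the path. This shows at once that no second pendant edge can lie in $M'$ and that the rest of the gadget is determined. However, its two cases presuppose that some pendant edge is in $M'$; the paper never argues this explicitly.

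Your Step~1 replaces this with a counting argument. The $d-1$ degree-two connectors, all of whose neighbours are copies, must absorb $d-1$ distinct copies, so exactly one copy is left to be covered by its pendant edge. This gives existence and uniqueness of the pendant edge in one stroke, uniformly for $d=4,5$ and without the endpoint/interior case split. Your Step~2 then propagates \emph{inward} from the two ends of the path, i.e.\ it reduces to uniqueness of the perfect matching of a path on an even number of vertices.

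One small wording point: the two subpaths have an even number of vertices and hence odd length in edges, so ``even-length'' should read ``even order''.

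The paper's route is more local and names the forced edges directly. Yours is shorter and explicitly covers the existence part that the paper's case analysis takes for granted.
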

    
\begin{proof}
    Assume that vertex $v_i$ has degree~$5$.
    The case of degree~$4$ follows by the same argument.
    We consider two cases.

    \begin{enumerate}[(1)]
    \item An edge $\{v_{i,s},v_{j_s}\}~(s=1,5)$ is contained in the perfect matching $M'$.
    \item An edge $\{v_{i,s},v_{j_s}\}~(s=2,3,4)$ is contained in the perfect matching $M'$.
    \end{enumerate}
    
    In case (1), suppose that $\{v_{i,1},v_{j_1}\}\in M'$.
    Then vertex $v_{i,(1,2)}$ cannot be matched with $v_{i,1}$, so $\{v_{i,(1,2)}, v_{i,2}\}\in M'$.
    Consequently, $\{v_{i,2}, v_{j_2}\}\notin M'$.
    Repeating the same argument yields
    $\{v_{i,(2,3)}$, $v_{i,3}\}, \{v_{i,(3,4)}, v_{i,4}\}, \{v_{i,(4,5)}, v_{i,5}\}\in M'$,
    and therefore
    $\{v_{i,3}, v_{j_3}\}, \{v_{i,4}, v_{j_4}\}, \{v_{i,5}, v_{j_5}\}\notin M'$.
    The same reasoning applies when $\{v_{i,5},v_{j_5}\}\in M'$.

    In case (2), if $\{v_{i,s}, v_{j_s}\} \in M'$, two adjacent vertices cannot be matched with $v_{i,s}$.
    Since these vertices have degree~$2$, their other incident edges (those not connected to $v_{i,s}$) must belong to $M'$.
    The remainder follows an argument analogous to that of case (1), which proves the lemma.
\end{proof}

\begin{lemma}\label{lem:one_to_one_ch4}
    There is a one-to-one correspondence between the perfect matchings of $G$ and those of $G'$.
\end{lemma}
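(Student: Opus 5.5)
We will mirror the proof of Lemma~\ref{lem:one-to-one}: construct an explicit map $\phi$ from $\mathcal{M}(G)$ to $\mathcal{M}(G')$, construct a map $\psi$ in the opposite direction, and show $\phi\circ\psi$ and $\psi\circ\phi$ are identities. Edges of $G$ not incident to any transformed vertex are copied unchanged in $G'$. Each edge $\{v_i,v_{j_s}\}$ incident to a transformed vertex $v_i$ corresponds to the edge $\{v_{i,s},v_{j_s}\}$. If both endpoints were transformed, it corresponds to the edge joining the appropriate copies. We fix this edge correspondence $e\mapsto e'$ first; it is a bijection between $E(G)$ and the set of ``external'' edges of $G'$, i.e., those not inside any gadget path $v_{i,1},v_{i,(1,2)},v_{i,2},\dots$.

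Define $\phi(M)$ as follows. For each $e\in M$, put $e'$ into $\phi(M)$. For each transformed vertex $v_i$, let $s$ be the unique index with $\{v_i,v_{j_s}\}\in M$. Add the gadget edges forced by Lemma~\ref{lem:prop_M_2}: $\{v_{i,(t,t+1)},v_{i,t+1}\}$ for $t<s$ and $\{v_{i,(t,t+1)},v_{i,t}\}$ for $t\ge s$. Then check that every vertex of $G'$ is covered exactly once. An original untransformed vertex is covered as in $M$. A copy $v_{i,s}$ is covered by its external edge. Every other copy $v_{i,t}$ is covered by exactly one gadget edge, coming from the left neighbor if $t<s$ and from the right neighbor if $t>s$. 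Each degree-$2$ vertex $v_{i,(t,t+1)}$ is covered by exactly one edge. Thus $\phi(M)\in\mathcal{M}(G')$.

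Define $\psi(M')=\{e\in E(G): e'\in M'\}$. By Lemma~\ref{lem:prop_M_2}, each gadget $G'_{v_i}$ has exactly one external edge in $M'$, so $v_i$ is covered exactly once by $\psi(M')$. Untransformed vertices are covered exactly once because their incident edges are unchanged. Hence $\psi(M')\in\mathcal{M}(G)$. Clearly $\psi(\phi(M))=M$. For $\phi(\psi(M'))=M'$, note that both matchings contain the same external edges. The second part of Lemma~\ref{lem:prop_M_2} says that the gadget edges are uniquely determined by the chosen external edge, so the two matchings agree inside each gadget as well. Hence $\phi$ is bijective.

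The main obstacle is bookkeeping when two adjacent vertices of $G$ both have degree at least $4$, since the edge between them then joins copies on both sides. Because Lemma~\ref{lem:prop_M_2} is local to each gadget and the edge correspondence is fixed globally, the arguments for the two gadgets are independent. I would state this explicitly rather than re-derive the gadget analysis.
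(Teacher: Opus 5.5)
Your plan matches the paper's: define $\phi$ gadget by gadget and get bijectivity from Lemma~\ref{lem:prop_M_2}. Your explicit inverse $\psi$ is more careful than the paper, which simply asserts that gluing local bijections gives a global one. This includes your check that the external edges of an arbitrary $M'$ form a perfect matching of $G$, and your remark about adjacent transformed vertices. However, the one explicit formula in your proof is reversed, so the set you call $\phi(M)$ is not a matching.

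Write $d=\deg(v_i)\in\{4,5\}$. With your rule, if $1<s<d$ then $v_{i,s}$ is covered three times:
\begin{itemize}
\item by its external edge;
\item by $\{v_{i,(s-1,s)},v_{i,s}\}$, from the case $t=s-1<s$;
\item by $\{v_{i,(s,s+1)},v_{i,s}\}$, from the case $t=s\ge s$.
\end{itemize}
Meanwhile $v_{i,1}$ and $v_{i,d}$ are not covered at all. If $s=1$ (resp.\ $s=d$), then $v_{i,s}$ is covered twice and $v_{i,d}$ (resp.\ $v_{i,1}$) is uncovered. Your coverage check misses this for three reasons: it never verifies that no gadget edge meets $v_{i,s}$, $v_{i,1}$ has no left neighbour, and $v_{i,d}$ has no right neighbour.

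The correct rule comes from deleting $v_{i,s}$ from the path $v_{i,1},v_{i,(1,2)},v_{i,2},\dots,v_{i,d}$. This leaves two paths, each with an even number of vertices (possibly zero), and each has a unique perfect matching. Take $\{v_{i,t},v_{i,(t,t+1)}\}$ for $t<s$ and $\{v_{i,(t,t+1)},v_{i,t+1}\}$ for $t\ge s$. This agrees with the paper's list; for example, $s=2$ gives $\{v_{i,1},v_{i,(1,2)}\}$, $\{v_{i,(2,3)},v_{i,3}\}$, \dots. Under this rule, copies $v_{i,t}$ with $t<s$ are matched to the degree-$2$ vertex on their right, and those with $t>s$ to the one on their left, which is the opposite of what you wrote.

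After this swap, the rest of your argument goes through unchanged: $\psi$ is well defined, $\psi\circ\phi=\mathrm{id}$, and $\phi\circ\psi=\mathrm{id}$ follows from the uniqueness part of Lemma~\ref{lem:prop_M_2}.
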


\begin{proof}
    We show that there exists a bijection $\phi$ that maps each perfect matching $M \in \mathcal{M}(G)$ to a perfect matching $M' \in \mathcal{M}(G')$.
    We consider the correspondence between matchings in $G_{v_i}$ and those in the subgraph $G'_{v_i}$.
    For simplicity, assume that vertex $v_i$ has degree~$5$.
    The case where the degree is four can be handled by constructing an analogous mapping.
    Vertices of degree at most $3$ are not transformed; therefore, for any edge $e \in M$ incident to these vertices, the corresponding edge in $G'$ is included in $\phi(M)$.
    The mapping $\phi$ is obtained by applying a local mapping $\phi_{v_i}$ to each transformed subgraph $G'_{v_i}$, while preserving all matching edges outside the transformed subgraphs.

    For an edge $e=\{v_i, v_{j_s}\}$ in $G_{v_i}$, define the mapping $\phi_{v_i}(M)$ as follows:

    \begin{itemize}
        \item If $\{v_i, v_{j_1}\} \in M$, then $\{v_{i,1}, v_{j_1}\}, \{v_{i,(1,2)}, v_{i,2}\}, \{v_{i,(2,3)}, v_{i,3}\}, \{v_{i,(3,4)}, v_{i,4}\}$, and $ \{v_{i,(4,5)}, v_{i,5}\} \in \phi_{v_i}(M)$.
        
        \item If $\{v_i, v_{j_2}\} \in M$, then $\{v_{i,2}, v_{j_2}\}, \{v_{i,1}, v_{i,(1,2)}\}, \{v_{i,(2,3)}, v_{i,3}\}, \{v_{i,(3,4)}, v_{i,4}\}$, and $\{v_{i,(4,5)}, v_{i,5}\} \in \phi_{v_i}(M)$.
        
        \item If $\{v_i, v_{j_3}\} \in M$, then $\{v_{i,3}, v_{j_3}\}, \{v_{i,1}, v_{i,(1,2)}\}, \{v_{i,2}, v_{i,(2,3)}\}, \{v_{i,(3,4)}, v_{i,4}\}$, and $\{v_{i,(4,5)}, v_{i,5}\} \in \phi_{v_i}(M)$.
        
        \item If $\{v_i, v_{j_4}\} \in M$, then $\{v_{i,4}, v_{j_4}\}, \{v_{i,1}, v_{i,(1,2)}\}, \{v_{i,2}, v_{i,(2,3)}\}, \{v_{i,3}, v_{i,(3,4)}\}$, and $\{v_{i,(4,5)}, v_{i,5}\} \in \phi_{v_i}(M)$.
        
        \item If $\{v_i, v_{j_5}\} \in M$, then $\{v_{i,5}, v_{j_5}\}, \{v_{i,1}, v_{i,(1,2)}\}, \{v_{i,2}, v_{i,(2,3)}\}, \{v_{i,3}, v_{i,(3,4)}\}$, and $\{v_{i,4}, v_{i,(4,5)}\} \in \phi_{v_i}(M)$.
    \end{itemize}
    By Lemma~\ref{lem:prop_M_2}, there is a one-to-one correspondence between the possible matching configurations in $G'_{v_i}$ and the choices of matching edges incident to $v_i$ in $G$.
    Hence, each local mapping $\phi_{v_i}$ is bijective.
    Since $\phi$ is obtained by combining these local bijections and the identity mapping on unchanged edges, $\phi$ is also bijective.
\end{proof}

Next, we show a property of {\FS} on $G'_{v_i}$.

\begin{corollary}\label{cor:forcingset}
    Let $M'$ be a perfect matching of $G'$ and $F_{M'}$ be a forcing set of $M'$.
    Let $M'_{v_i}$ be the partial matching of $G'_{v_i}$ induced by $M'$.
    If $F_{M'_{v_i}} = F_{M'} \cap E(G'_{v_i})$ is nonempty, then there exists a set $F'_{M'_{v_i}} \subseteq E(G'_{v_i})$ such that $|F'_{M'_{v_i}}|=1$ and $(F_{M'} \setminus F_{M'_{v_i}})\cup F'_{M'_{v_i}}$ is also a forcing set of $M'$.
\end{corollary}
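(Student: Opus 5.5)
The plan is to take $F'_{M'_{v_i}} = \{e^*\}$, where $e^* = \{v_{i,s}, v_{j_s}\}$ is the unique edge of $M'$ of this form in $G'_{v_i}$ guaranteed by Lemma~\ref{lem:prop_M_2}. We then show that $F^* := (F_{M'} \setminus F_{M'_{v_i}}) \cup \{e^*\}$ is still a forcing set of $M'$. First, $e^* \in M'$, so $F^* \subseteq M'$. It remains to prove that every perfect matching $N$ of $G'$ with $F^* \subseteq N$ also contains $F_{M'}$. Then $N = M'$, because $F_{M'}$ is a forcing set.

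Let $N$ be such a matching. Since $e^* \in N$, the second part of Lemma~\ref{lem:prop_M_2} says that $N$ is uniquely determined within $G'_{v_i}$. Since $M'$ also contains $e^*$, we get $N \cap E(G'_{v_i}) = M' \cap E(G'_{v_i})$. In particular, $F_{M'_{v_i}} \subseteq M' \cap E(G'_{v_i}) \subseteq N$. Together with $F_{M'} \setminus F_{M'_{v_i}} \subseteq F^* \subseteq N$, this gives $F_{M'} \subseteq N$, and therefore $N = M'$.

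The step needing most care is the claim that the edge set of $G'_{v_i}$ is exactly the gadget: the five edges $\{v_{i,s}, v_{j_s}\}$ plus the path edges through the vertices $v_{i,(t,t+1)}$. Lemma~\ref{lem:prop_M_2} pins down all of these once $e^*$ is fixed. The first part of that lemma follows because each $v_{i,(t,t+1)}$ has degree two, so the matching propagates along the path from the chosen $v_{i,s}$ in both directions. That propagation also fixes every edge $\{v_{i,s'}, v_{j_{s'}}\}$ with $s' \neq s$ as unmatched. We also need the existence of $e^*$ in $M'$, which is the first part of Lemma~\ref{lem:prop_M_2}.

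For vertices of degree~$4$ the same argument applies verbatim. Finally, the new set satisfies $|F^*| \le |F_{M'}|$, since $F_{M'_{v_i}}$ is nonempty and is replaced by a single edge.
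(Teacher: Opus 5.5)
Your proof is correct and takes the same approach as the paper: you choose $F'_{M'_{v_i}}$ to be the unique edge $\{v_{i,s},v_{j_s}\}$ of $M'$ in $G'_{v_i}$ and invoke Lemma~\ref{lem:prop_M_2} to conclude that it fixes $M'$ inside the gadget. You also make explicit the step the paper leaves implicit: any perfect matching containing the new set agrees with $M'$ on $E(G'_{v_i})$, so it contains all of $F_{M'}$ and must equal $M'$.
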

 
\begin{proof}
    By Lemma~\ref{lem:prop_M_2}, once the edge $\{v_{i,s}, v_{j_s}\}$ is included in $M'_{v_i}$, the matching $M'_{v_i}$ is uniquely determined.
    Therefore, letting $F'_{M'_{v_i}} = \{\{{v_{i,s}, v_{j_s}}\}\}$ proves the corollary.
\end{proof}

\begin{lemma} \label{FPM53_reduction}
    There exists a polynomial-time reduction from {\FPM} on bipartite graphs of maximum degree~$5$ to {\FPM} on bipartite graphs of maximum degree~$3$.
\end{lemma}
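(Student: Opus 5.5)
The reduction maps an instance $(G,k)$ of {\FPM}, where $G$ is bipartite of maximum degree~$5$, to the instance $(G',k)$. Here $G'$ is obtained by applying the gadget transformation of Figures~\ref{figure:5_3} and~\ref{figure:4_3} to every vertex of degree $4$ or $5$. Each gadget has $O(1)$ vertices and edges, so the construction takes $O(|V(G)|+|E(G)|)$ time. By Lemma~\ref{lem:5to3_bipartite}, $G'$ is bipartite of maximum degree~$3$. It then remains to prove that $G$ has a perfect matching with a {\FS} of size at most $k$ if and only if $G'$ does.

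The key tool is the bijection $\phi$ of Lemma~\ref{lem:one_to_one_ch4}, together with a correspondence between edges. Every edge $e=\{u,w\}$ of $G$ has a unique \emph{connector} edge $\hat e$ in $G'$. The endpoint $u$ of $e$ is replaced by the copy $u_{s}$ if $u$ was transformed and kept otherwise, and likewise for $w$. From the definition of $\phi$, in particular the five cases of $\phi_{v_i}$, I will check the invariant
\[ e\in M \iff \hat e\in \phi(M) \]
for every $M\in\mathcal M(G)$ and every $e\in E(G)$. Every edge of $G'$ that is not a connector lies inside some gadget $G'_{v_i}$.

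(i)$\Rightarrow$(ii): let $F$ be a {\FS} of $M$ with $|F|\le k$, and set $\hat F=\{\hat e : e\in F\}\subseteq\phi(M)$. Suppose some $M''\neq\phi(M)$ contains $\hat F$. Then $\phi^{-1}(M'')\neq M$, and by the invariant $\phi^{-1}(M'')$ contains $F$, a contradiction. Hence $\hat F$ is a {\FS} of $\phi(M)$ with $|\hat F|=|F|$.

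(ii)$\Rightarrow$(i): let $F'$ be a {\FS} of some $M'\in\mathcal M(G')$ with $|F'|\le k$. First, for each gadget $G'_{v_i}$ that contains an internal (non-connector) edge of $F'$, I apply Corollary~\ref{cor:forcingset}. This replaces $F'\cap E(G'_{v_i})$ by the single connector edge $\{v_{i,s},v_{j_s}\}\in M'$. By Lemma~\ref{lem:prop_M_2} that connector edge determines the whole restriction of $M'$ to the gadget, so it forces at least as much as the removed edges, and the set remains a {\FS}. The new set is also no larger: the removed part is nonempty, and if the chosen connector edge is already in $F'$ the size even drops.

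After these replacements, $F'$ consists only of connector edges $\hat e$ with $\hat e \in M'$. I then set $F=\{e : \hat e\in F'\}$. By the invariant, $F\subseteq\phi^{-1}(M')$ and $|F|\le k$. If some other perfect matching $M''$ of $G$ contained $F$, then $\phi(M'')\neq M'$ would contain $F'$, contradicting the forcing property. So $F$ is a {\FS} of $\phi^{-1}(M')$. Membership in {\NP} comes from Lemma~\ref{lem:FPM_in_NP}.

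The main obstacle is making the replacement step rigorous. The justification of Corollary~\ref{cor:forcingset} is terse, so I would spell out that every perfect matching of $G'$ containing the chosen connector edge agrees with $M'$ on the entire gadget. Such a matching therefore contains all of the removed internal edges, and so the modified set still forces $M'$.

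A related point to check is the case where an edge of $G$ joins two transformed vertices: then its connector edge is shared by two gadgets. I would argue that performing the replacements gadget by gadget never increases the total count, because each replacement deletes at least one edge and adds at most one. I would also verify that the invariant $e\in M\iff \hat e\in\phi(M)$ holds consistently at both ends of such a shared connector.
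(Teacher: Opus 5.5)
Your proposal is correct and follows the paper's proof: the same gadget reduction $(G,k)\mapsto(G',k)$, the bijection of Lemma~\ref{lem:one_to_one_ch4} with its edge-to-connector correspondence, and Corollary~\ref{cor:forcingset} to collapse each gadget's part of a forcing set to its single matched connector edge before mapping back to $G$. Your explicit invariant $e\in M \iff \hat e\in\phi(M)$, and the contrapositive arguments through $\phi$ in both directions, supply the justification the paper only sketches: the paper cites the corollary for (i)$\Rightarrow$(ii) and calls (ii)$\Rightarrow$(i) the ``inverse'' construction.
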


\begin{proof}
    We transform an instance $(G,k)$ of {\FPM} on a bipartite graph of maximum degree~$5$ into an instance $(G',k)$ of {\FPM} on a bipartite graph of maximum degree~$3$.
    The graph $G'$ is obtained by applying the above transformation $G'_{v_i}$ to each vertex $v_i$ of degree at least $4$ in $G$.
    In this transformation, at most eight vertices and eight edges are introduced for each vertex of degree at least $4$ in $G$.
    Therefore, at most $8|V(G)|$ vertices and $8|V(G)|$ edges are added, implying that the transformation from $G$ to $G'$ can be performed in $O(|V(G)|)$ time.

    Next, we show that the following statements are equivalent.
    \begin{enumerate}[(i)]
        \item There exists a perfect matching $M$ in $G$ that has a {\FS} of size at most $k$.
        \item There exists a perfect matching $M'$ in $G'$ that has a {\FS} of size at most $k$.
    \end{enumerate}

    \noindent (i)$\Rightarrow$(ii):
    Let $F_M$ be a {\FS} of a perfect matching $M$ in $G$ satisfying (i).
    By Lemma~\ref{lem:one_to_one_ch4}, there exists a unique perfect matching $M'$ in $G'$ corresponding to $M$.
    For a pair of vertices $v_i$ and $v_j$ of degree at most $3$ in $G$, if an edge $\{v_i, v_j\}\in F_M$, we include $\{v_i, v_j\}$ in $F_{M'}$.
    For each vertex $v_i$ of degree at least $4$, if an edge $\{v_i, v_{j_s}\}\in F_M$, we include $\{v_{i,s}, v_{j_s}\}$ in $F_{M'}$.
    By Corollary~\ref{cor:forcingset}, the set $F_{M'}$ constructed in this way uniquely determines $M'$.\\

    \noindent (ii)$\Rightarrow$(i):
    Let $F_{M'}$ be a {\FS} of a perfect matching $M'$ in $G'$ satisfying (ii).
    By Corollary~\ref{cor:forcingset}, for each subgraph $G'_{v_i}$, if $F_{M'_{v_i}} = F_{M'} \cap E(G'_{v_i})$ is nonempty, then there exists a set $F'_{M'_{v_i}} \subseteq E(G'_{v_i})$ such that $|F'_{M'_{v_i}}|=1$ and $(F_{M'} \setminus F_{M'_{v_i}})\cup F'_{M'_{v_i}}$ is also a forcing set of $M'$.
    Applying this replacement to every subgraph, there exists another {\FS} $F'_{M'}$ of $M'$ such that $|F'_{M'}|\le |F_{M'}|$.
    For each edge $\{v_i, v_j\}\in F'_{M'}$ that is not contained in any subgraph constructed by the reduction, there exists an edge $\{v_i, v_j\}$ in $G$ from the reduction.
    In this case, we include $\{v_i, v_j\}$ in $F_M$.
    For each edge $\{v_{i,s}, v_{j_s}\}\in F'_{M'}$ that is contained in some subgraph constructed by the reduction, there exists an edge $\{v_i, v_{j_s}\}$ in $G$ from the reduction.
    In this case, we include $\{v_i, v_{j_s}\}$ in $F_M$.
    This operation is the inverse of the construction of $F_{M'}$ from $F_M$, so $F_M$ uniquely determines $M$.
\end{proof}

\begin{theorem}\label{FPM_NPc}
    {\FPM} on bipartite graphs of maximum degree~$3$ is {\NPc}.
\end{theorem}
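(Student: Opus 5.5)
The proof combines membership in \NP{} with a hardness reduction, mirroring the proof of Theorem~\ref{AFGPM_NPc}. For membership, Lemma~\ref{lem:FPM_in_NP} already shows that {\FPM} is in \NP{} on arbitrary graphs. It therefore remains in \NP{} when the input is restricted to bipartite graphs of maximum degree~$3$, because checking that restriction is trivial.

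For hardness, we start from the result of Afshani, Hatami, and Mahmoodian~\cite{afshani2009spectrum}: {\FPM} is {\NPc} on bipartite graphs of maximum degree~$5$. As the footnote in the introduction explains, their reduction produces graphs of maximum degree~$5$, not~$4$, and this is exactly the class we start from. Lemma~\ref{FPM53_reduction} then supplies a polynomial-time reduction $(G,k)\mapsto(G',k)$ from {\FPM} on bipartite graphs of maximum degree~$5$ to {\FPM} on bipartite graphs of maximum degree~$3$. Lemma~\ref{lem:5to3_bipartite} confirms that $G'$ really lies in the target class. Composing the two polynomial-time reductions shows that {\FPM} on bipartite graphs of maximum degree~$3$ is {\NPh}, and together with membership in \NP{} this gives {\NPc}ness.

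Given the earlier results, the composition is routine, so the real weight rests on Lemma~\ref{FPM53_reduction}. The point to double-check there is that the size bound $k$ is preserved in both directions. This relies on Corollary~\ref{cor:forcingset}, which lets any nonempty intersection of a forcing set with a gadget $G'_{v_i}$ be replaced by the single edge $\{v_{i,s},v_{j_s}\}$. That corollary in turn relies on the bijection of Lemma~\ref{lem:one_to_one_ch4}.

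A second subtlety is an edge $\{v_i,v_j\}$ whose two endpoints both have degree at least~$4$. Such an edge lies in two gadgets, and the replacement edge for each gadget must be the same original edge, so that it is counted only once. I would make this explicit: identify the edge of $G'$ joining $v_{i,s}$ and $v_{j,t}$ with the original edge. The replacement then maps each such edge of $G'$ to one edge of $G$ and never increases the size of the forcing set.
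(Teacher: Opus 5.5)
Your proof is correct and follows the paper's argument: membership via Lemma~\ref{lem:FPM_in_NP}, and hardness by composing the maximum-degree-$5$ result of Afshani et al.\ with the reduction of Lemma~\ref{FPM53_reduction}. Your remark about identifying the edge $\{v_{i,s},v_{j,t}\}$ when both endpoints have degree at least~$4$ is a sound clarification of a point the paper leaves implicit, and it does not threaten the size bound, since the gadget replacements can only merge edges, never add them.
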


\begin{proof}
    By Lemma~\ref{lem:FPM_in_NP}, {\FPM} belongs to {\NP}.
    {\FPM} on bipartite graphs of maximum degree~$5$ is {\NPc} from the result of Afshani et al.~\cite{afshani2009spectrum}.
    Moreover, by Lemma~\ref{FPM53_reduction}, there exists a polynomial-time reduction from {\FPM} on bipartite graphs of maximum degree~$5$ to {\FPM} on bipartite graphs of maximum degree~$3$.
    Therefore, {\FPM} on bipartite graphs of maximum degree~$3$ is {\NPh}.
    Hence, the theorem holds.
\end{proof}

\begin{theorem}\label{AFPM_NPc}
    {\AFPM} on bipartite graphs of maximum degree~$3$ is {\NPc}.
\end{theorem}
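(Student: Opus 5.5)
The plan is the standard two-part argument, mirroring the proof of Theorem~\ref{AFGPM_NPc}: show membership in {\NP}, then compose the reductions already established.

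\textbf{Membership.} This is immediate from Lemma~\ref{lem:AFPM_in_NP}. A certificate is a perfect matching $M$ together with a set $A \subseteq E(G)\setminus M$ of size at most $k$. Uniqueness of the perfect matching in $G-A$ can be verified in polynomial time.

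\textbf{Hardness.} Start from Theorem~\ref{FPM_NPc}, which says that {\FPM} is {\NPc} on bipartite graphs of maximum degree~$3$. Given such an instance $(G,k)$, apply the reduction of Theorem~\ref{thm:FPMtoAFPM_reduction} to obtain the instance $(G',k)$ of {\AFPM}, where $G'$ is formed by subdividing every edge of $G$ twice. That theorem guarantees the two instances have the same answer, and the construction takes $O(|E(G)|)$ time. By Lemma~\ref{lem:bipartite} with $k=3$, $G'$ is again bipartite of maximum degree~$3$. The composed map is therefore a polynomial-time reduction from {\FPM} on bipartite graphs of maximum degree~$3$ to {\AFPM} on the same class, so {\AFPM} on this class is {\NPh}. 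Together with membership, this gives {\NP}-completeness.

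\textbf{Main obstacle.} The only substantive step is the correctness of the reduction in Theorem~\ref{thm:FPMtoAFPM_reduction}, which here may be assumed. In that argument the existential quantifier over $M$ transfers through the bijection $\phi$ of Lemma~\ref{lem:one-to-one}. In one direction, a forcing set $F$ for $M$ maps to the anti-forcing set $\{e^{\textnormal{mid}} : e \in F\}$ for $\phi(M)$. In the other direction, Lemma~\ref{lem:replace_A} first replaces an arbitrary anti-forcing set of some $M'$ by one contained in $E_{\textnormal{mid}}$ without increasing its size, and this pulls back to a forcing set of $\phi^{-1}(M')$. The one thing to check is that none of these steps needs $M$ to be fixed in advance, so the ``given $M$'' argument carries over to the ``exists $M$'' version. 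It does, since $\phi$ is a bijection on all perfect matchings.
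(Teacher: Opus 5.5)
Your proposal is correct and matches the paper's proof. Membership comes from Lemma~\ref{lem:AFPM_in_NP}, and hardness comes from composing Theorem~\ref{FPM_NPc} with the subdivision reduction of Theorem~\ref{thm:FPMtoAFPM_reduction}, with Lemma~\ref{lem:bipartite} keeping the instance bipartite of maximum degree~$3$; your remark about carrying the existential quantifier through the bijection $\phi$ just spells out what the paper summarizes as ``the proof proceeds in the same way.''
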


\begin{proof}
    By Lemma~\ref{lem:AFPM_in_NP}, {\AFPM} belongs to {\NP}.
    By Theorem~\ref{FPM_NPc}, {\FPM} on bipartite graphs of maximum degree~$3$ is {\NPc}.
    Furthermore, by Lemma~\ref{lem:bipartite} and Theorem~\ref{thm:FPMtoAFPM_reduction}, there exists a polynomial-time reduction from {\FPM} on bipartite graphs of maximum degree~$3$ to {\AFPM} on bipartite graphs of maximum degree~$3$.
    Therefore, {\AFPM} on bipartite graphs of maximum degree~$3$ is {\NPh}.
    Hence, the theorem holds.
\end{proof}

\bibliographystyle{plainurl}
\bibliography{ref}

@article{adams2004forced,
  title={On the forced matching numbers of bipartite graphs},
  author={Adams, Peter and Mahdian, Mohammad and Mahmoodian, Ebadollah S},
  journal={Discrete Mathematics},
  volume={281},
  number={1-3},
  pages={1--12},
  year={2004},
  publisher={Elsevier},
  doi={10.1016/j.disc.2002.10.002}
}

@article{afshani2009spectrum,
  author       = {Peyman Afshani and
                  Hamed Hatami and
                  Ebadollah S. Mahmoodian},
  title        = {On the spectrum of the forced matching number of graphs},
  journal      = {Australas. {J} Comb.},
  volume       = {30},
  pages        = {147--160},
  year         = {2004}
}

@article{deng2007anti,
  title={The anti-forcing number of hexagonal chains},
  author={Deng, Hanyuan},
  journal={MATCH Commun. Math. Comput. Chem},
  volume={58},
  number={3},
  pages={675--682},
  year={2007}
}

@article{deng2008anti,
  title={The anti-forcing number of double hexagonal chains},
  author={Deng, Hanyuan},
  journal={MATCH Commun. Math. Comput. Chem},
  volume={60},
  number={1},
  pages={183--192},
  year={2008}
}

@article{deng2017anti,
  title={Anti-forcing spectra of perfect matchings of graphs},
  author={Deng, Kai and Zhang, Heping},
  journal={Journal of Combinatorial Optimization},
  volume={33},
  number={2},
  pages={660--680},
  year={2017},
  publisher={Springer},
  doi={10.1007/s10878-015-9986-3}
}

@article{kleinerman2006bounds,
  title={Bounds on the forcing numbers of bipartite graphs},
  author={Kleinerman, Seth},
  journal={Discrete mathematics},
  volume={306},
  number={1},
  pages={66--73},
  year={2006},
  publisher={Elsevier},
  doi={10.1016/j.disc.2005.11.001}
}

@article{LI1997295,
  title = {Hexagonal systems with forcing single edges},
  journal = {Discrete Applied Mathematics},
  volume = {72},
  number = {3},
  pages = {295-301},
  year = {1997},
  issn = {0166-218X},
  author = {Xueliang Li},
  doi={10.1016/0166-218X(95)00116-9}
}

@article{harary1991graphical,
  title={Graphical properties of polyhexes: perfect matching vector and forcing},
  author={Harary, Frank and Klein, Douglas J and {\v{Z}}ivkovi{\v{c}}, Tomislav P},
  journal={Journal of mathematical chemistry},
  volume={6},
  number={1},
  pages={295--306},
  year={1991},
  publisher={Springer},
  doi={10.1007/BF01192587}
}

@article{vukiveevic2007anti,
  title={On the anti-forcing number of benzenoids},
  author={Vuki{\v{e}}evi{\'c}, Damir and Trinajsti{\'c}, Nenad},
  journal={Journal of mathematical chemistry},
  volume={42},
  number={3},
  pages={575--583},
  year={2007},
  publisher={Springer},
  doi={https://doi.org/10.1007/s10910-006-9133-6}
}

@article{randic1985kekule,
  title={Kekule valence structures revisited. {Innate} degrees of freedom of pi-electron couplings},
  author={Randi\'{c}, Milan and Klein, Douglas J},
  journal={Mathematical and computational concepts in chemistry, Wiley, New York},
  pages={274--282},
  year={1985}
}

@article{DemaineKP25,
  author       = {Erik D. Demaine and
                  Kritkorn Karntikoon and
                  Nipun Pitimanaaree},
  title        = {2-Colorable Perfect Matching is {NP}-complete in 2-Connected 3-Regular
                  Planar Graphs},
  journal      = {Theory Comput. Syst.},
  volume       = {69},
  number       = {2},
  pages        = {22},
  year         = {2025}
}

@inproceedings{gabow1999unique,
  title={Unique maximum matching algorithms},
  author={Gabow, Harold N and Kaplan, Haim and Tarjan, Robert E},
  booktitle={Proceedings of the thirty-first annual ACM symposium on Theory of Computing},
  pages={70--78},
  year={1999},
  doi={10.1006/jagm.2001.1167}
}

@article{ZhangHLZ25,
  author       = {Yaxian Zhang and
                  Xin He and
                  Qianqian Liu and
                  Heping Zhang},
  title        = {Forcing, anti-forcing, global forcing and complete forcing on perfect matchings of graphs - {A} survey},
  journal      = {Discret. Appl. Math.},
  volume       = {376},
  pages        = {318--347},
  year         = {2025},
  doi          = {10.1016/J.DAM.2025.06.022},
}

@article{Harary07,
author = {Harary, Frank and  Slany, Wolfgang and  Verbitsky, Oleg},
title = {On the Computational Complexity of the Forcing Chromatic Number},
journal = {SIAM Journal on Computing},
volume = {37},
number = {1},
pages = {1-19},
year = {2007},
doi = {10.1137/050641594},
}

@article{Larson13,
author = {Larson, Craig and Cleemput, Nico Van},
title = {Forcing Independence},
journal = {Croatia Chemica Acta},
year = {2013},
volume = {86},
number = {4},
pages = {469-475},
doi={10.5562/cca2295}
}

@article{Chartrand97,
author={Chartrand, Gary and Gavlas, Heather and Vandell, Robert C. and Harary, Frank},
title={The forcing domination number of a graph},
journal={Journal of Combinatorial Mathematics and Combinatorial Computing},
year={1997},
volume={25},
pages={161-174}
}

@article{Armada19,
title={Forcing Independent Domination Number of a Graph},
author={Cris L. Armada and Canoy, Jr., Sergio R.},
place={Maryland, USA},
DOI={10.29020/nybg.ejpam.v12i4.3484},
journal={European Journal of Pure and Applied Mathematics},
volume={12},
number={4},
year={2019},
month={Oct.},
pages={1371–1381}
}

@inproceedings{GKOS26,
  author       = {Tatsuya Gima and 
                  Yasuaki Kobayashi and 
                  Yota Otachi and 
                  Takumi Sato},
  title        = {Forcing a unique minimum spanning tree and a unique shortest path},
 booktitle    = {Proc. of the 20th International Conference and Workshops on Algorithms and Computation, {WALCOM} 2026},
  series       = {Lecture Notes in Computer Science},
  pages        = {371--385},
  publisher    = {Springer},
  year         = {2026},
  doi          = {10.1007/978-981-95-7127-7\_25}
}

\end{document}